\documentclass[smallextended]{svjour3}       
\smartqed  
\usepackage{graphicx}
%
%
%
%


\bibliographystyle{chicago}

\usepackage{epsfig, amsmath, amsfonts, amssymb, multirow,algorithm,algorithmic}
\usepackage{graphicx}
\usepackage{setspace}
\usepackage{enumerate}
\usepackage{lipsum}
\usepackage{epstopdf}
\usepackage{bm}
\usepackage{natbib}
\usepackage{subfig}
\doublespacing
\usepackage{hyperref}
\hypersetup{citecolor=blue,colorlinks=true}


\setlength{\topmargin}{-.3in}
\setlength{\oddsidemargin}{-.02in}
\setlength{\evensidemargin}{-.02in}
\setlength{\textwidth}{6.6 in}
\setlength{\textheight}{8.4in}

%


\newcommand{\beq}{\begin{equation}}
\newcommand{\eeq}{\end{equation}}
\newcommand{\ber}{\begin{eqnarray}}
\newcommand{\eer}{\end{eqnarray}}

\newcommand{\x}{\boldsymbol{x}}
\newcommand{\y}{\boldsymbol{y}}
\newcommand{\z}{\boldsymbol{z}}
\newcommand{\bs}{\boldsymbol}

\newcommand{\bit}{\begin{itemize}}
\newcommand{\eit}{\end{itemize}}
\newcommand{\iid}{\stackrel{\mathrm{i.i.d.}}{\sim}}
\newcommand{\ben}{\begin{enumerate}}
\newcommand{\een}{\end{enumerate}}

\numberwithin{equation}{section}

\begin{document}

%
%


\title{A Two Stage Adaptive Metropolis Algorithm }


\author{Anirban Mondal         \and
        Kai Yin  \and
        Abhijit Mandal
}


\institute{A. Mondal \at
              Case Western Reserve University, Cleveland, OH 44106, USA 
              \email{axm912@case.edu}           
           \and
           K. Yin \at
             Case Western Reserve University, Cleveland, OH 44106, USA 
             \email{kxy160@case.edu}   
             \and
             A. Mandal \at
             University of Texas at El Paso, El Paso, TX 79968, USA
             \email{amandal@utep.edu}
}

\maketitle 

\begin{abstract} 


We propose a new sampling algorithm combining two quite powerful ideas in the Markov chain Monte Carlo literature -- adaptive Metropolis sampler and two-stage Metropolis-Hastings sampler. 
The proposed sampling method will be particularly very useful for high-dimensional posterior sampling in Bayesian models with expensive likelihoods. In the first stage of the proposed algorithm, an adaptive proposal is used based on the previously sampled states and the corresponding acceptance probability is computed based on an approximated inexpensive target density. The true expensive target density is evaluated while computing the second stage acceptance probability only if the proposal is accepted in the first stage. The adaptive nature of the algorithm guarantees faster convergence of the chain and very good mixing properties. On the other hand, the  two-stage  approach helps in rejecting the bad proposals in the inexpensive first stage, making the algorithm computationally efficient. As the proposals are dependent on the previous states the chain loses its Markov property, but we prove that it retains the desired ergodicity property. The performance of the proposed algorithm is compared with the existing algorithms in two simulated and two real data examples.

\end{abstract}

\keywords{Markov chain Monte Carlo, Metropolis-Hastings, Two-stage Metropolis-Hastings, Adaptive Metropolis, Ergodicity. }



\section{Introduction}
Markov Chain Monte Carlo (MCMC) methods have been widely used to sample from a target probability distribution for which the density function can only be analytically expressed up to a constant of proportionality, or  for which it is difficult to compute the inverse of the corresponding cumulative distribution function. Due to the recent advances in Bayesian methodology, posterior sampling has become an integral part of Bayesian inferential procedures. The complicated hierarchical nature of advanced Bayesian models more often result in intractable posteriors. The use of MCMC methods is thus essential for sampling-based Bayesian inference related to such models.
Metropolis algorithm \citep{metro1953} and the Metropolis-Hastings (MH) algorithm \citep{hastings1970} are the two most popular MCMC methods used to sample from an unknown target probability distribution. The main difficulty of the random walk Metropolis algorithm is to choose an effective proposal distribution such that reasonable results are obtained by simulation in a limited amount of time. Since the target density is unknown, the size and the spatial orientation of an effective proposal distribution are often very difficult to choose, especially for high-dimensional target probability distributions  \citep{gilks1995adaptive, Gelman1996, Roberts1997, haario1999adaptive}. Several research efforts have been made in this regard, mainly based on adaptive MCMC methods that use the history of the process to effectively `tune' the proposal distribution \citep{Evans1991, Gilks1994, gelfand1994markov, Gilks1998}.
\cite{haario2001} introduced an adaptive Metropolis (AM) algorithm which adapts continuously to the target distribution by using a proposal distribution with covariance estimated from the previous sampled states. The adaptation significantly affects both the size and the spatial orientation of the proposal distribution.
The rapid start of the adaptation in the AM method ensures that the search becomes more effective at an early stage of the simulation. As a result, the chain converges faster and it also has very good mixing properties when compared to other MCMC algorithms, especially for high dimensional target distribution.


In the last few decades, Bayesian statistics have become increasingly popular as they are capable of analyzing data with complex structures. For posterior sampling-based inference in Bayesian hierarchical models, the AM has emerged as one the popular and effective sampling algorithms, especially when the dimension of the parameter is very high.
However, in such posterior sampling, the direct use of the MH or the AM algorithm often becomes computationally challenging due to repeated evaluation of expensive likelihood functions. Such models with expensive likelihoods can be found in many situations -- a Bayesian hierarchical model with a very large observed data (also called tall data), a Bayesian inverse problem where likelihood contains a non-linear ``forward model" etc. In the first situation, a very large number of observations makes the likelihood  computationally expensive  as a complete scan of the data is needed to compute the joint density of the data given the parameters. In the latter case, the ``forward model" is generally described by a system of ordinary differential equation (ODE) or partial differential equation (PDE) which is solved by a discretized numerical method. This numerical solver can be very expensive depending on the complexity of the model and the discretization grid. The computation complexity of the forward model makes the likelihood computationally expensive. 
In such complex statistical models, the repeated evaluations of the likelihood in each iteration of the sampling algorithm make the MCMC method computationally very expensive and often infeasible. One of the commonly used posterior sampling techniques that deal with such expensive likelihood evaluations is the Approximate Bayesian Computation (ABC) based methods \citep{Tavare1997, Beaumont2002, Joyce2008}. ABC-based methods approximate the likelihood function by simulations, the outcomes of which are compared with the observed data. Sufficient summary statistics are also used instead of the full high dimensional data for the comparison. The second approach  is to use sub-sampling methods to provide a faster estimation of the likelihood \citep{Bard2017, Mat2019}. Another very popular approach is the two-stage Metropolis-Hastings (TSMH) method \citep{Christin2005, efendiev2005, mondal2014, payne2018},  which uses an approximate cheap likelihood to screen the bad proposals on the first stage. The expensive true likelihood is only evaluated for a fraction of total iterations when the good proposals pass the first stage. However, the main difficulty in the TSMH method, in a high-dimensional setting, is in choosing an appropriate effective proposal distribution that can ensure good mixing properties and faster convergence of the chain.


For high-dimensional expensive target densities, such as posteriors with expensive likelihoods, we propose an algorithm combining the two powerful ideas in the MCMC literature that are discussed above: i) adaptive Metropolis (AM) method and ii) two-stage Metropolis-Hasting (TSMH) method. The proposed algorithm is named the two-stage adaptive Metropolis (TSAM) algorithm. At each iteration of the algorithm, in the first stage, an adaptive proposal is drawn from a distribution whose covariance is estimated from all the previous sampled states. The acceptance probability of this proposal in the first stage is computed based on an approximate version of the target density that is computationally inexpensive. If the proposal is accepted in this first stage, then it is passed on to the second stage, where another acceptance probability is computed based on the true expensive target density. The adaptation strategy in the first stage forces the proposal distribution to approach an appropriately scaled Gaussian approximation of the target distribution, which ensures faster convergence of the chain and very good mixing properties, especially in a high dimensional setting. On the other hand, the two-stage nature of the algorithm makes it computationally efficient, when compared to the AM, as a result of rejecting the bad proposals in the inexpensive first stage. The TSAM will be particularly very useful for high-dimensional posterior sampling in Bayesian models with expensive likelihoods. As the proposal distribution in TSAM depends on all the previous sampled states the chain loses its Markov property. But our main result, Theorem \ref{thm1}, proves that the TSAM  produces a chain that has the correct ergodicity properties, assuming that the target density is bounded from above and has bounded support.


The proposed algorithm is first applied to a few simulation examples, viz., sampling from a multivariate t-distribution and sampling from a transformed multivariate normal distribution with a banana-shaped relationship between components. The simulation results show that both the TSAM and AM algorithms perform comparably in terms of effective sampling and ergodicity and both are much superior in this regard to the regular MH  and TSMH algorithms. To compare the computational efficiency of the TSAM and AM methods, both are used for posterior sampling in a Bayesian logistic regression model with a very large banking dataset. As the dataset is large the full likelihood needs a complete scan of the data and hence becomes very expensive. For such a logistic regression model with a large number of zero responses, an approximate inexpensive likelihood can be computed using a sub-sample method \cite{raftery2012fast}. This approximate likelihood is used in the first stage of TSAM. The efficiency of the two methods is compared using a statistic named effective draws per minute (EPDM), which incorporates both execution time and autocorrelation of the chain. 
Finally, both the TSAM and AM methods are used for posterior sampling in Bayesian calibration of a predator-prey model. This is an inverse problem where the ``forward model" is a system of ODE which is expensive to solve numerically in a fine time grid. The Bayesian calibration method casts the inverse solution in terms of a posterior probability distribution. The likelihood term includes the nonlinear forward model and hence the posterior is intractable. At the first stage of TSAM, an approximate likelihood based on the solution to the forward model on a coarser time grid is used. If the proposal is passed in the first stage, the forward model is solved in the fine grid to compute the more accurate likelihood. The simulation results from these last two examples showed that the TSAM is computationally more efficient than the AM, the level of efficiency depending on relative efficiency in computing the approximate likelihood with respect to the actual likelihood, while each having a similar convergence and ergodicity properties. 

The article is organized as follows, in Section \ref{TSAM} we introduce the proposed TSAM algorithm and in Section \ref{Ergodicity} we prove that the proposed algorithm produces a chain that has the correct ergodicity properties. Application of the algorithm to simulation examples and posterior sampling in Bayesian methods using real data are being discussed in Section \ref{simulation}. In Section \ref{conclusion}, we conclude our article with a brief discussion.

\section{Description of the two-stage adaptive Metropolis algorithm}
\label{TSAM}
Suppose our goal is to sample from a $d$-dimensional target distribution $\pi(\x)$, which is expensive to evaluate even to a constant of proportionality. Let us assume $\pi(\x)$ can be well approximated by $\pi^*(\x)$ which is much cheaper to compute in terms of computation time.
Suppose we are at iteration $t-1$ and $(\x_0, \x_1, ..., \x_{t-1})$ are the sampled states, $\x_0$ being the initial state. The proposed two stage adaptive Metropolis (TSAM) algorithm is given by the following steps: 
\begin{enumerate}
\item A candidate $\x^*$ is sampled from the proposal distribution  $q_t(.|\x_0, \x_1, ..., \x_{t-1})$, which is a Gaussian distribution with mean $\x_{t-1}$ and covariance $C_t$ defined as
\beq
\label{ct}
C_t = 
\begin{cases}
C_0 & \text{if} \ t <t_0, \\
s_d {\rm cov}(\x_0, \x_1, ..., \x_{t-1}) + s_d \epsilon \bs{I}_d & \text{if} \ t\geq t_0,
\end{cases}
\eeq
where $s_d$ is a parameter
that depends only on dimension $d$ and $\epsilon$ is a small constant which ensures that $C_t$ does
not become singular. Here $\bs{I}_d$ denotes the $d$-dimensional identity matrix. In order to
start, an arbitrary, strictly positive definite initial covariance $C_0$ is selected according to the
best prior knowledge.  $t_0 \ (>0)$ is the length of
an initial period for which $C_0$ is used for the proposal distribution.

\item  In the first stage the candidate $\x^*$ is screened and passed to the second stage with acceptance probability given by
\beq
\label{stage1ap}
\alpha_1(\x_{t-1}, \x^*) = \min\Big(1, \frac{\pi^*(\x^*)}{\pi^*(\x_{t-1})}\Big).
\eeq
This is equivalent of taking the final proposal as 
\beq
\x = 
\begin{cases}
\x^* & \text{with probability} \ \alpha_1(\x_{t-1}, \x^*), \\
\x_{t-1} & 
\text{with probability} \ 1- \alpha_1(\x_{t-1}, \x^*).
\end{cases}
\eeq
\item Accept $\x$ as a sample from $\pi$ with acceptance probability 
\beq
\label{stage2ap}
\alpha_2(\x_{t-1}, \x) = \min \Big(1, \frac{\pi(\x)\pi^*(\x_{t-1})}{\pi(\x_{t-1})\pi^*(\x)} \Big).
\eeq




\end{enumerate}

\noindent
{\bf Remark 1:}
The covariance matrix $C_t$ needs to be computed in the first stage at every iteration of the algorithm. But this can be computed cheaply using the recursive formula
%
\beq
C_{t+1} = \frac{t-1}{t}C_t + \frac{s_d}{t} (t\bar{\x}_{t-1}\bar{\x}_{t-1}^T - (t+1)\bar{\x}_{t}\bar{\x}_{t}^T + \x_t \x_t^T +\epsilon \bs{I}_d),
\eeq
%
where $\bar{\x}_{t} = \frac{1}{t+1}\sum_{i=0}^{t}\x_i$,  and it can also be updated recursively by $\bar{\x}_t = \frac{1}{t+1}\big(t\bar{\x}_{t-1} + \x_{t}\big)$. Using this, the recursive formula can be simplified as
\beq
C_{t+1} = \frac{t-1}{t}C_t + \frac{s_d}{t+1}(\x_{t}-\bar{\x}_{t-1})(\x_{t}-\bar{\x}_{t-1})^T  + \frac{s_d}{t} \epsilon \bs{I}_d.
\eeq 

\noindent
{\bf Remark 2:}
Computation of $C_{t}^{1/2}$ is required to sample from the normal proposal in the first stage of every iteration. This is usually done by the Cholesky decomposition, which has a computation cost of $O(d^3)$. One way to reduce this computation cost is to apply the rank one Cholesky update of $C_{t}$ at every iteration, as given in  \cite{dongarra1979}, which requires $O(d^2)$ operation. Let $R_t=C_{t}^{1/2}$ as obtained by the Cholesky decomposition, then one can apply the rank one Cholesky update algorithm to 
$\sqrt{\frac{t-1}{t}} R_t$  and  $\sqrt{\frac{s_d}{t+1}}(\x_t-\bar{\x}_{t-1})^T$
and obtain the update $R_{t+1}$ of $C_{t+1}^{1/2}$.

\noindent
{\bf Remark 3:}
 This covariance update can also be done after every $K$-th step using a similar recursive formula
%
%
\beq
    C_{t+1} = \frac{t-K-1}{t} C_{t-K} + \frac{1}{t}\Big(\sum_{i=t-K+1}^{t}\x_i\x_i^T +(t-K+1)\bar{\x}_{t-K}\bar{\x}_{t-K}^T -(t+1)\bar{\x}_{t}\bar{\x}_{t}^T\Big),
\eeq
and 
\beq
\bar{\x}_{t} = \frac{1}{t+1}\Big((t-K)\bar{\x}_{t-K} + \sum_{i=t-K+1}^{t}\x_{i}\Big)
\eeq
%
%
%
This will save more computational time  while generating the proposals in the first stage. The corresponding Cholesky update can be obtained by $K$ successive application of rank one Cholesky update.
However, it would be computationally cheaper to calculate the Cholesky decomposition directly on $C_{t+1}$ when the updating step size $K$ is larger than the dimension of parameters $d$.  

\noindent 
{ \bf Remark 4:}
One of the common choices for the scaling parameter is given by $s_d=(2.4)^2/d$. \cite{Gelman1996} showed that in a certain sense this choice optimizes the mixing properties of the Metropolis search in the case of Gaussian targets and Gaussian proposals. In our simulation study, it is observed that choosing an even smaller $s_d$ initially gives better performance. 

\noindent
{\bf Remark 5:}
 $C_0$ should be a positive definite matrix. The choice of $C_0$ should be such that the chain does not get stuck at a given state within the initial $t_0$ runs. If a priori knowledge about the covariance of the target distribution is available, it can be utilized in choosing $C_0$. If no information about the covariance is available, we recommend using $C_0=\bs{I}_dc(2.4)^2/d$, where $c\leq1$ is chosen such that the chain moves more frequently during the initial $t_0$ runs.

\noindent 
{ \bf Remark 6:} The choice for the length of the initial segment $t_0$ reflects our trust in the initial covariance $C_0$. The bigger it is chosen, the slower the effect of the adaptation is felt.

\noindent 
{ \bf Remark 7:}
The role of the parameter $\epsilon$ is to ensure that $C_t$ is a positive definite matrix. In practice, this should be chosen as a very small positive quantity close to $0$. The condition $\epsilon>0$ is also required to prove the correct ergodicity property of the algorithm (see Theorem \ref{thm1}).

\section{Ergodicity of the two-stage adaptive Metropolis algorithm} \label{Ergodicity}
As the proposal distribution depends on all the previous samples via $C_t$, the chain loses the Markov property. In this section, we will prove, from the first principles, that the TSAM algorithm has the right ergodicity properties and hence provides a correct simulation of the target distribution. 
We shall assume that $D \subset \mathbb{R}^d$ is a Borel-measurable subset of the Euclidean space, and the target $\pi : D \rightarrow [0, \infty)$ is a probability density on D.
Thus $D$ is our state-space equipped with the Borel $\sigma$-algebra $\mathcal{B}(D)$ and we denote $\mathcal{M}(D)$ as the set of finite measures on $(D, \mathcal{B}(D))$. 
We also assume that the density is bounded from above on $D$: for some $M<\infty$, i.e., we have that
\beq
\label{boundpi}
\pi(\x) < M \ \text{for all} \ \x \in D.
\eeq
Let $q_C(\x,\y)$ be the density of a Gaussian proposal (proposing $\y$ from the current state $\x$) with covariance matrix $C$. Thus 
\beq
q_C(\x,\y)=\frac{1}{\sqrt{2\pi}|C|^{1/2}} \exp\Big(-\frac{1}{2}(\y-\x)^TC^{-1}(\y-\x)\Big).
\eeq
Then the two-stage adaptive Metropolis algorithm transition probability, $Q_C$, having the target density $\pi(\x)$, the approximate first stage density $\pi^*(\x)$, and the proposal density $q_C$ is given by:
For any Borel-measurable subset $A \subset D$ such that $\x \notin A$,
\beq
Q_C(\x;A)=\int_{A}q_C(\x,\y)\alpha_1(\x,\y)\alpha_2(\x,\y)d\y
\eeq
and
\beq
Q_C(\x;\{\x \}) = 1- Q_C(\x:D \backslash \{\x\})
\eeq
or equivalently for any Borel measurable set $A\subset D$
\ber
\label{transk}
Q_C(\x;A)&=&\int_{A}q_C(\x,\y)\alpha_1(\x,\y)\alpha_2(\x,\y)d\y \nonumber\\
&+&\chi_A(\x)\left(\int_{\mathbb{R}^d}q_C(\x,\y)(1-\alpha_1(\x,\y))d\y +  \int_{\mathbb{R}^d}q_C(\x,\y)\alpha_1(\x,\y)(1-\alpha_2(\x,\y))d\y\right),
\eer
where $\chi_A(\x)$ is the characteristic function of the set $A$, $\alpha_1(\x,\y)=\min\Big(1,\frac{\pi^*(\y)}{\pi^*(\x)}\Big)$ and $\alpha_2(\x,\y)=\min \Big(1,\frac{\pi^(\y)\pi^*(\x)}{\pi^(\x)\pi^*(\y)}\Big)$.
The sequence $(K_n)_{n=1}^{\infty}$ of generalized transition probabilities defining the two-stage adaptive Metropolis algorithm is given by
\beq
\label{tp}
K_n(\x_0,\x_1,\cdots, \x_{n-1};A)=Q_{C_n(\x_0,\x_1,\cdots,\x_{n-1})}(\x_{n-1};A)
\eeq

It is also useful to define the transition probability that is obtained from a
generalized transition probability by `freezing' the $n-1$ first variables. Hence, given a generalized transition probability $K_n$ (where $n > 2$) and a fixed $(n-1$)-tuple $(\y_0, \y_1,\cdots, \y_{n-2}) \in S^{n-1}$, we denote $\tilde{\y}_{n-2}= (\y_0, \y_1,\cdots, \y_{n-2})$ and denote the transition probability
$K_{n, \tilde{\y}_{n-2}}$ by
\beq
K_{n, \tilde{\y}_{n-2}}(\x;A)=K_n(\y_0,\cdots,\y_{n-2},\x;A).
\eeq

\begin{theorem}
\label{thm1}
Let $\pi$ be the density of a target distribution supported on a bounded measurable
subset $D\subset \mathbb{R}^d$ and assume that $\pi$ is bounded from above. Then the two-stage adaptive Metropolis algorithm (TSAM),
as described in Section \ref{TSAM} is ergodic, i.e., 
if $(X_n)$ be a sequence of random vectors generated by the TSAM, then for any bounded and measurable function $f : D \rightarrow R$, the equality 
\beq
\lim_{n\rightarrow \infty}\frac{1}{n+1}\sum_{i=0}^{n}f(X_i) = \int_D f(\x)\pi(\x)d\x
\eeq
holds almost surely.
\end{theorem}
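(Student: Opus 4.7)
My plan is to follow the framework that Haario, Saksman, and Tamminen (2001) developed for the AM algorithm, adapting it to the two-stage acceptance structure of TSAM. The core obstacle is that $(X_n)$ is non-Markovian because $C_n$ depends on the entire history $(X_0,\dots,X_{n-1})$, so Birkhoff's ergodic theorem cannot be applied directly. The route I propose is to decompose $\frac{1}{n+1}\sum_{i=0}^{n}f(X_i) - \int f\,d\pi$ by conditioning on $\mathcal{F}_n=\sigma(X_0,\dots,X_n)$, and to control the resulting bias and martingale terms using three ingredients: (i) for every positive-definite $C$, the frozen kernel $Q_C$ has $\pi$ as its invariant measure; (ii) the family $\{Q_C\}$ over admissible covariances satisfies a uniform Doeblin minorization; and (iii) the adaptation is diminishing, i.e., $\|Q_{C_{n+1}}-Q_{C_n}\|\to 0$ almost surely.

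Ingredient (i) is the standard invariance of two-stage Metropolis-Hastings. Since $q_C(\x,\y)=q_C(\y,\x)$, detailed balance reduces to the identity $\pi(\x)\alpha_1(\x,\y)\alpha_2(\x,\y)=\pi(\y)\alpha_1(\y,\x)\alpha_2(\y,\x)$, which I would verify by case analysis on the directions of the inequalities inside the two $\min$-ratios (as in Christen and Fox 2005). For ingredient (iii), I would use the recursive update in Remark~1 to show that $\|C_{n+1}-C_n\|=O(1/n)$, since the rank-one increment $(\x_n-\bar{\x}_{n-1})(\x_n-\bar{\x}_{n-1})^T$ has norm bounded uniformly on the compact $D$; combined with continuity of $C\mapsto Q_C$ in total-variation on the compact set of admissible covariances, this supplies the diminishing-adaptation condition.

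Ingredient (ii) is the technical heart and the main obstacle. The $\epsilon\bs{I}_d$ regularization in (\ref{ct}) forces $C_n\succeq s_d\epsilon\bs{I}_d$, while boundedness of $D$ bounds $C_n$ from above; hence $q_{C_n}(\x,\y)$ is bounded below by a positive constant uniformly on $D\times D$ and uniformly in $n$. The delicate point is bounding the composite acceptance $\alpha_1\alpha_2$ from below in the absence of a pointwise positive lower bound on $\pi$ (the theorem only assumes $\pi$ is bounded above). I would handle this by mimicking the localization trick of Haario et al.\ (2001, Lemma~1): restrict attention to a sub-level set $D_c=\{\x\in D:\pi(\x)\geq c\}$ (with an analogous assumption for $\pi^*$), derive a minorization $Q_C(\x,A)\geq \delta\,\nu(A)$ valid on $D_c$, and use the Gaussian proposal's positivity to show the chain returns to $D_c$ in geometric time. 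Plugging (i)--(iii) into the adaptive ergodic theorem of Haario et al.\ (2001, Theorem~1), or equivalently the diminishing-adaptation / containment criterion of Roberts and Rosenthal (2007), then yields the claimed almost sure convergence.
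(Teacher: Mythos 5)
Your proposal follows the same skeleton as the paper's proof: both reduce the theorem to the three conditions of the auxiliary ergodicity result in \cite{haario2001} --- a uniform Doeblin/mixing bound for the frozen kernels, exact $\pi$-invariance of the frozen two-stage kernel $Q_C$ (verified by the detailed-balance identity $\pi(\x)q_C(\x,\y)\alpha_1\alpha_2 = \pi(\y)q_C(\y,\x)\alpha_1\alpha_2$, as in \citealp{Christin2005} and \citealp{Efendiev2006}), and an $O(k/n)$ bound on kernel perturbations obtained from the $O(1/n)$ recursion for $C_t$ together with the uniform bounds $a_1 I_d \leq C \leq a_2 I_d$ forced by $\epsilon>0$ and the boundedness of $D$. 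The one place you genuinely diverge is the minorization, and there your route is both harder and unnecessary: the paper does not need $\pi$ bounded below, because it minorizes by the target itself, $K_{n,\tilde{\y}_{n-2}}(\x;A)\geq a_3\pi(A)$, using only the lower bound on $q_C$ and the upper bound \eqref{boundpi} via $\min\bigl(1,\pi(\y)/\pi(\x)\bigr)\geq \pi(\y)/M$; this gives a global Doeblin condition with $k_0=1$ and no sub-level-set localization or return-time argument is required (nor does such a lemma appear in \cite{haario2001} --- their argument is the global one). What a careful version of either argument does need, and what you correctly sense is delicate, is two-sided control of the first-stage density: in the case where $\alpha_1<1\leq\alpha_2$ the composite acceptance equals $\pi^*(\y)/\pi^*(\x)$, which is not bounded below by a multiple of $\pi(\y)$ unless the ratio $\pi/\pi^*$ is bounded above and below on $D$; the paper leaves this assumption implicit, and your write-up should state it rather than route around it with a drift condition. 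Finally, be careful with your fallback to the Roberts--Rosenthal diminishing-adaptation/containment criterion: that theorem delivers a weak law of large numbers, whereas the claim here is almost-sure convergence, so you must use the quantitative three-condition theorem of \cite{haario2001} (which is exactly why conditions (ii) and (iii) carry explicit $c_0/n$ and $c_1 k/n$ rates rather than mere convergence to zero).
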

\begin{proof}
Using Theorem 2 from page 230 of \cite{haario2001}, it is sufficient to show that the sequence of generalized transition probability $(K_n)$, as defined in \eqref{tp}, satisfies the following three conditions:
\begin{enumerate}
    \item[(i)] There is a fixed integer $k_0$ and a constant $\lambda \in (0, 1)$ such that
    \beq
    \delta((K_{n, \tilde{\y}_{n-2}})^{k_o})\leq \lambda < 1 \ \text{for all } \ \tilde{\y}_{n-2} \in D^{n-1} \ \text{and} \ n\geq 2,
    \eeq
    where $\delta(\cdot)$ is the Dobrushin ergodicity coefficient \citep{Dobrushin_1956}.
    \item[(ii)] There is a probability measure $\pi$ on D and a constant $c_0 > 0$ such that
    \beq
    || \pi K_{n, \tilde{\y}_{n-2}} - \pi|| \leq \frac{c_0}{n} \ \text{for all } \ \tilde{\y}_{n-2} \in D^{n-1} \ \text{and} \ n\geq 2.
    \eeq
    
    \item[(iii)] The estimate for the operator norm
    \beq
    || K_{n, \tilde{\y}_{n-2}} - K_{n+k, \tilde{\y}_{n+k-2}} ||_{\mathcal{M}(D)\rightarrow \mathcal{M}(D)} \ \leq \ c_1\frac{k}{n}
    \eeq
    hold, where $c_1$ is a positive constant, $n, k > 1$ and it is assumed that $(n+k-1)$-tuple $\tilde{\y}_{n+k-2}$ is a direct continuation of $(n-1)$-tuple $\tilde{\y}_{n-2}$.
    \end{enumerate}
    
    Let us first verify that condition (i) is satisfied. From the definition of $C_t$ in  \eqref{ct} and the fact that D is bounded, it is trivial to show that all the covariances $C=C_n(\y_0, \y_1, \cdots, \y_{n-1})$ satisfy the matrix inequality  
    \beq
    \label{ct_in}
    0<a_1I_d \leq C \leq a_2I_d,
    \eeq
    where  $0 < a_1 < a_2 < 1$ are constants that depend on $D, \epsilon$ and $C_0$ and $I_d$ is the $d$-dimensional identity matrix. The inequality in 
    \eqref{ct_in} implies that the corresponding normal proposal densities $q_C(\x,\cdot)$ are uniformly bounded from below on D for all $\x \in S$. Using this fact along with conditions \eqref{boundpi}, \eqref{transk} and \eqref{tp}, we obtain the following bound     \beq
    \label{nume}
    K_{n, \tilde{\y}_{n-2}}(\x;A)\geq a_3\pi(A) \ \ \text{for all} \ \x \in S \ \text{and} \ A\subset D,
    \eeq
    where $a_3>0$ is a constant. 
    Using a proof similar to page 122--123 of \cite{nummelin84} \eqref{nume} yields  $\delta(K_{n, \tilde{\y}_{n-2}}) \leq  1 - a_3$. It proves that condition (i) is satisfied with $k_0 =1$.
    
    Now let us verify that condition (iii) is satisfied. Note that, for all $n\geq 2$ and for a given $\tilde{\y}_{n+k-2}$, the following inequality holds:
    \beq
    \label{iiia}
    || K_{n, \tilde{\y}_{n-2}} - K_{n+k, \tilde{\y}_{n+k-2}} ||_{\mathcal{M}(D)\rightarrow \mathcal{M}(D)} \ \leq \ 2 \sup_{\y \in D, A \in \mathcal{B}(D)}|K_{n, \tilde{\y}_{n-2}}(\y;A) - K_{n+k, \tilde{\y}_{n+k-2}}(\y;A)|.
    \eeq
    Now, fix $\y \in D$ and $A \in \mathcal{B}(D)$ and let $R_1 = C_n(\y_0, \cdots, \y_{n-2},\y)$ and $R_2 = C_n(\y_0, \cdots, \y_{n+k-2},\y)$. Then using \eqref{transk} and \eqref{tp}, we obtain 
    \beq
\begin{split}
    \label{iiib}
    |K_{n, \tilde{\y}_{n-2}}(\y;A) - K_{n+k, \tilde{\y}_{n+k-2}}(\y;A)| & = |Q_{R_1}(\y;A) - Q_{R_2}(\y;A)| \nonumber \\
    &\leq \Big|\int_{\x \in A}(q_{R_1}(\y-\x) - q_{R_2}(\y-\x))\alpha_1(\y,\x)\alpha_2(\y,\x)d\x \Big| \nonumber \\
    &  \ \ \ + \Big| \chi_A(\y)\int_{\mathbb{R}^d}(q_{R_1}(\y-\x) - q_{R_2}(\y-\x))(1-\alpha_1(\y-\x))d\x \Big| \nonumber \\
    & \ \ \ + \Big|\chi_A(\y)\int_{\mathbb{R}^d}(q_{R_1}(\y-\x) - q_{R_2}(\y-\x))\alpha_1(\y,\x)(1-\alpha_2(\y,\x))d\x \Big| \nonumber \\
    &\leq 3\int_{\mathbb{R}^d}\big| q_{R_1} (\z)- q_{R_2}(\z)\big| d\z \nonumber \\
    &\leq 3\int_{\mathbb{R}^d} \left(\int_{0}^{1}\left|\frac{d }{ds}q_{R_1+s(R_2-R_1)}(\z)\right|ds \right)d\z . \nonumber 
    \end{split}
\eeq
    Using \eqref{ct_in} it can be shown that the partial derivatives of the density $q_{R_1+s(R_2-R_1)}$ with respect to the components of the covariance are integrable over $\mathbb{R}^d$. This also yields
    \beq
    \label{iiic}
     \int_{\mathbb{R}^d} \left(\int_{0}^{1}\left|\frac{d }{ds}q_{R_1+s(R_2-R_1)}(z)\right|ds \right)d\z \leq a_5||R_2 - R_1||,
    \eeq
    where $a_5$ depends only on $\epsilon, C_0$ and $D$.
    Now from the recursive formula for $C$ it is straightforward to show that $|C_t - C_{t+1}|\leq a_6/t$ for $t>1$. Using this inductively, we can obtain from  \eqref{ct_in}
    \beq
    \label{iiid}
    ||R_2 - R_1||\leq a_7(\epsilon, C_0, D)k/n.
    \eeq
    Combining \eqref{iiia}, \eqref{iiib}, \eqref{iiic}, and \eqref{iiid} yields (iii).
    
    Finally, we verify that condition (ii) is satisfied. 
    Let us fix $\tilde{\y}_{n-2} \in D^{n-1}$ and denote $C^* = C_{n-1}(\y_0,\cdots, \y_{n-2})$. Using the same argument as \eqref{iiid} it follows that $||C^* - C_{n}(\y_0,\cdots, \y_{n-2},\y) ||\leq a_8/n$, where $a_8$ does not depend on $\y \in D$. Now proceeding exactly same as \eqref{iiia}, \eqref{iiib}, and \eqref{iiic}, we  obtain 
    \beq
    \label{iia}
    || K_{n, \tilde{\y}_{n-2}} - Q_{C^*} ||_{\mathcal{M}(D)\rightarrow \mathcal{M}(D)} \leq a_9/n
    \eeq
    As $Q_{C^*}$ is the transition probability of the random-walk two-stage Metropolis-Hastings algorithm, we have $\pi Q_{C^*} = \pi$ (see proof in \citealp{Efendiev2006}).
    Hence, from \eqref{iia}, we obtain
    \beq
    || \pi K_{n, \tilde{\y}_{n-2}} - \pi||=|| \pi (K_{n, \tilde{\y}_{n-2}} -  Q_{C^*}) ||_{\mathcal{M}(D)\rightarrow \mathcal{M}(D)} \leq c_0/n.
    \eeq
    This completes the proof that condition (ii) is satisfied.
\qed
\end{proof}

\section{Applications of the two-stage adaptive Metropolis algorithm} \label{simulation}
\subsection{Simulation study from different multivariate distributions}
In this simulation study, we apply the TSAM algorithm to sample from different multivariate distributions. These are simple examples to illustrate that the TSAM algorithm is very similar to the AM algorithm in terms of ergodicity, mixing and convergence properties. The computing efficiency of the TSAM over the AM is not the focus of this simulation study as the probability density evaluations for these multivariate distributions are not expensive, and the scope of considering a significantly faster method to approximate these densities is very limited. The computing efficiency of the TSAM over the AM is shown in the next set of applications as described in subsections \ref{real1} and \ref{real2}.

\subsubsection{Simulation from a multivariate t-distribution} 
We first consider a $d$-dimensional multivariate shifted t-distribution with the location vector $\mathbf{\mu}$, shape matrix $\mathbf{\Sigma}$ and degrees of freedom $\nu$. The density of this target distribution is given by
\beq
\pi(\x|\mathbf{\mu},\Sigma,\nu) = \frac{\Gamma(\nu+d/2)}{\Gamma(\nu/2) (\nu \pi)^{d/2} \Sigma^{1/2}}\Big(1+\frac{(\x-\mathbf{\mu})^T\Sigma^{-1}(\x-\mathbf{\mu})}{\nu}\Big)^{-(d+\nu)/2}, \ \x \in \mathbb{R}^d.
\eeq

We use all the four methods MH, AM, TSMH, and TSAM to sample from this target distribution. In the first stage of the TSAM, we use the density of the multivariate normal distribution with mean $\mathbf{\mu}$ and covariance matrix $\mathbf{\Sigma}$ as the approximate target distribution $\pi^*(\cdot)$. In particular, here we consider a $8$-dimensional t distribution, i.e., $d=8$ with degrees of freedom $\nu = 10$, and the center at $\mathbf{\mu}=(0,1,2,3,4,5,6,7)$. The shape matrix $\mathbf{\Sigma}$ is such that the $(i,j)$-th element is given by $\sigma_i\sigma_j\rho^{|i-j|}$, where $(\sigma_1^2, \sigma_2^2,\sigma_3^2,\sigma_4^2, \sigma_5^2, \sigma_6^2, \sigma_7^2, \sigma_8^2)= (1,1,1,1,1,2,4,6)$, and $\rho=0.4$. In order to maintain a compact support for the TSAM, the distribution is truncated outside $5$ standard deviation in each component. The initial covariance $C_0$ for the TSAM and AM is taken to be $I_d 2.4^2/8$. Following \cite{Gelman1996} the random walk MH  and TSMH jump sizes are also taken in the same order $(2.4^2/8)$. The burn-in period was chosen to be half of the chain length for each of the four methods.
 

To compare the ergodicity of the chain,  $m=100$ different sets of samples are generated using the algorithm for a given sample size $n$. For each set of samples, the Monte Carlo average of a bounded function $f=10 \exp(-0.1(x_1+x_2+\cdots + x_8))$ is computed.  Suppose for the $k$-th set, $X_1, X_2, \cdots, X_n$ are the samples generated by the algorithm. The Monte Carlo estimate for $\int f(\x)\pi(\x)d\x$ is given by the sample average from this $k$-th set; $\bar{f}_k = \frac{1}{n+1}\sum_{i=0}^{n}f(X_i)$, $k=1,2,\cdots, m$. This process is repeated for different sample sizes $n=500, \ 1,000, \ 2,000, \ 5,000, \ 10,000$. The mean and the standard deviation of $m$ Monte Carlo averages are plotted for each of the sample sizes using the four methods (TSAM, AM, MH, and TSMH) in Figure \ref{fig:boxplot}. 
	\begin{figure}[t!]
		\centering%
			\includegraphics[height=10cm, width=15cm]{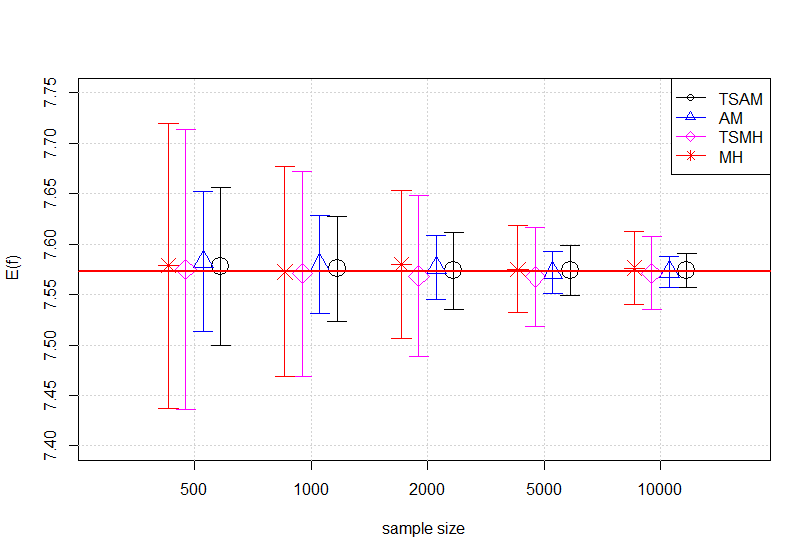}\
		\caption{The mean and one standard deviation error bar for Monte Carlo estimates of $E(f)$ using the four sampling methods. The red horizontal line denotes the value of the true $E(f)$. The target distribution is the $8$-dimensional multivariate shifted t-distribution.}
		\label{fig:boxplot}
	\end{figure}
The true value of the integral $E(f)=\int(f(x)\pi(x)dx$ is computed by simulating a random sample of size 1,000,000 using the {\it rmvt} function of the R package {\it mvtnorm}. The true value of $E(f)$ is shown in the red horizontal line. We observe that as the sample size increases the Monte Carlo average converges to the true $E(f)$ for all the four methods, confirming the ergodicity of the corresponding chains. 
Moreover, the results expressed in the figure indicate that the AM and TSAM algorithms simulate the target distribution more accurately than the MH and TSMH algorithms. 
	 \begin{figure}[h!]
		\centering%
		\subfloat[Along first principal component]{{\includegraphics[width=7cm, height=6cm]{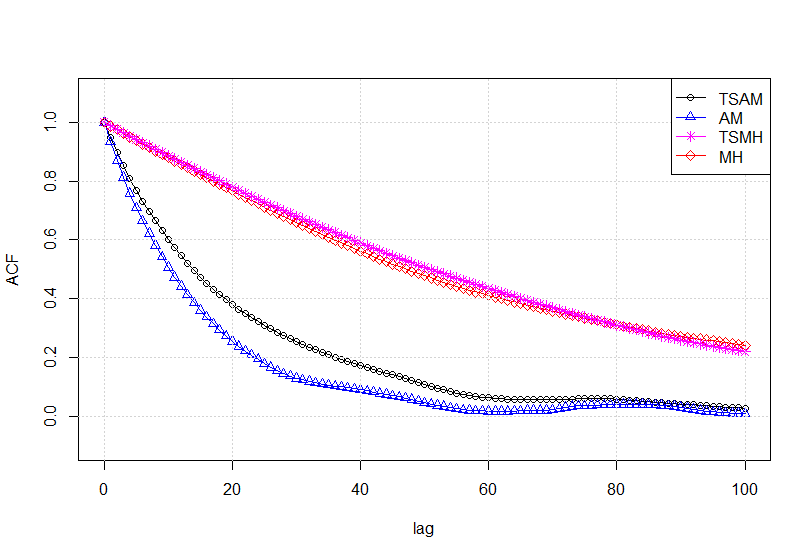}}}%
	    \qquad
		\subfloat[Along orthogonal direction]{{\includegraphics[width=7cm, height=6cm]{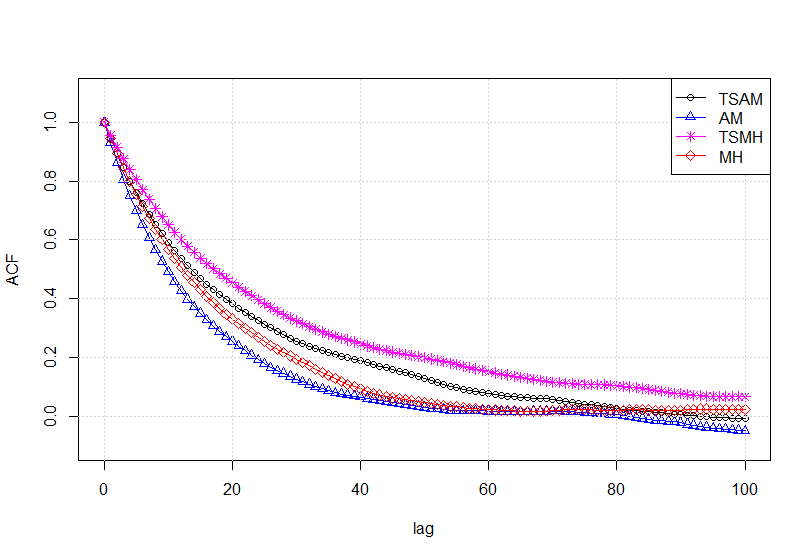} }}%
		\caption{The autocorrelation plot of different lags using the four sampling methods on the multivariate shifted t-distribution.}
		\label{fig:t_acf}
	\end{figure}
In Figure \ref{fig:t_acf} the autocorrelations of different lags of the samples generated by the four algorithms are plotted for two projections. In the direction of the largest principal component of the target distribution, the autocorrelation of the MH and TSMH algorithm indicates weaker convergence. The corresponding autocorrelation plot for AM and TSAM algorithms indicates much stronger converges of the corresponding chains and hence produces higher effective sample sizes. Note that both AM and TSAM perform very similarly in this regard.

\subsubsection{Simulation from a banana-shaped multivariate normal distribution}
In the next example, we consider a $d$-dimensional ``banana-shaped" multivariate normal distribution where some components are non-linearly dependent. The non-linear dependency creates banana-shaped contours for the corresponding components. 
The non-linear banana-shaped distributions are constructed from the Gaussian ones by `twisting' them as follows. Let $\pi_1$ be the density of a multivariate normal distribution $N (\boldsymbol{\mu},\mathbf{\Sigma})$, then the density function of the ``twisted" Gaussian with non-linear parameters $a$ and $b$ is given by
$\pi = \pi_1 \circ \phi_{a,b}(\x)$,
where $\phi_{a,b}(\x)=(ax_1 , x_2/a + ba^2(x_1^2+1), x_3, \cdots, x_d)$. Thus $\phi_{a,b}$ only changes the first and second coordinate and the determinant of the Jacobian of $\phi_{a,b}$  is identically equal to 1. This makes it very easy to calculate the confidence regions for the twisted Gaussian targets. We apply all the four methods (MH, AM, TSMH, and TSAM) to sample from this target distribution. In the first stage of TSAM, the density of the multivariate normal distribution with mean $\boldsymbol{\mu}$ and covariance matrix $\mathbf{\Sigma}$ is used as the approximate target distribution $\pi^*$ and  in the second stage, the true density $\pi$ is used. In particular, here we consider a $8$ dimensional ``banana-shaped" multivariate normal distribution where the ``twisting"  parameters for the first two components are taken as $ a= 1, b=0.05$. The mean $\boldsymbol{\mu}$ is  a zero vector and the covariance $\mathbf{\Sigma}$ is a diagonal matrix whose first diagonal element is $10$ and the rest are all $1$. To maintain the compact support of the target distribution for the TSAM we truncate the distribution outside 5 standard deviation in each component. The initial covariance $C_0$ for the TSAM and AM is taken to be $I_d 2.4^2/8$, which is also the jump-size for the random walk MH and TSMH algorithm. The burn-in period was the half of the chain length for all the four methods.
\begin{figure}[t!]
		\centering%
			\includegraphics[height=10cm, width=15cm]{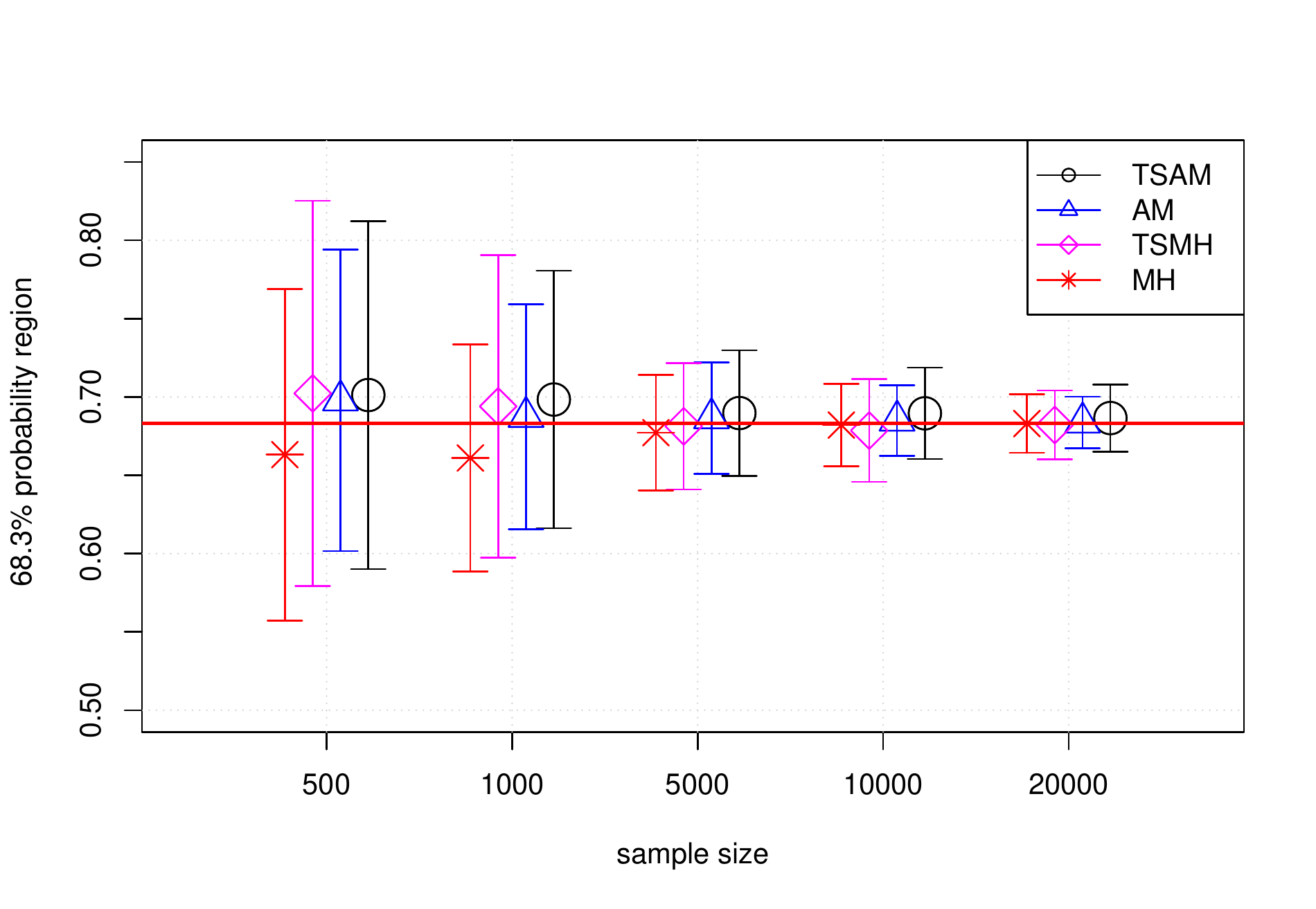}\
		\caption{The mean and one standard deviation error bar for the Monte Carlo estimates of 68.3\% probability region using the four methods. The red line denotes the true value. The target distribution is the banana shaped transformed multivariate normal distribution.}
		\label{fig:banana_ergo}
\end{figure}

To compare the ergodicity of the chains,  $m=100$ different sets samples of size $n$ are generated using the four algorithms. For each set of samples, the Monte Carlo estimate of the $68.3$\% coverage probability is computed by the proportion of samples that falls within the corresponding region. The process is then repeated for different sample sizes $n=500$,  1,000,  2,000,  5,000,  10,000,  20,000. The mean and the standard deviation of $m$ Monte Carlo estimates are plotted for each of the sample sizes using the four methods in Figure \ref{fig:banana_ergo}. The true value $0.683$ is shown by the horizontal red line. We observe that Monte Carlo estimates converge to the true value as sample size increases for all the four methods. We also observe that the standard deviation is smaller for the AM and TSAM than MH and TSMH respectively, confirming that the adaptive algorithms more accurately  sample the target distribution.
\begin{figure}[t!]
		\centering%
		\subfloat[Two-stage adaptive Metropolis]{{\includegraphics[width=7.5cm, height=6cm]{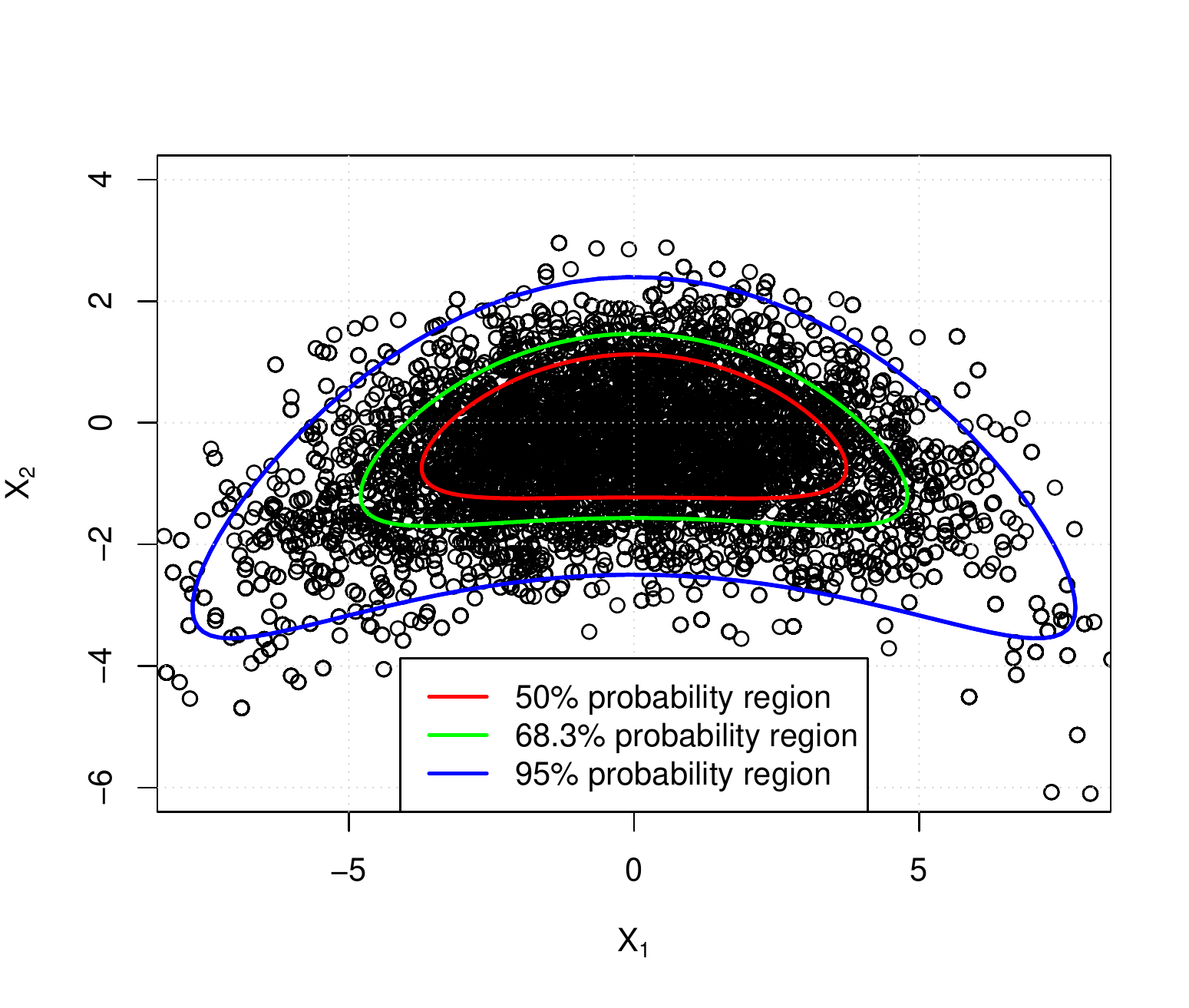}}}%
	    \qquad
		\subfloat[Adaptive Metropolis]{{\includegraphics[width=7.5cm, height=6cm]{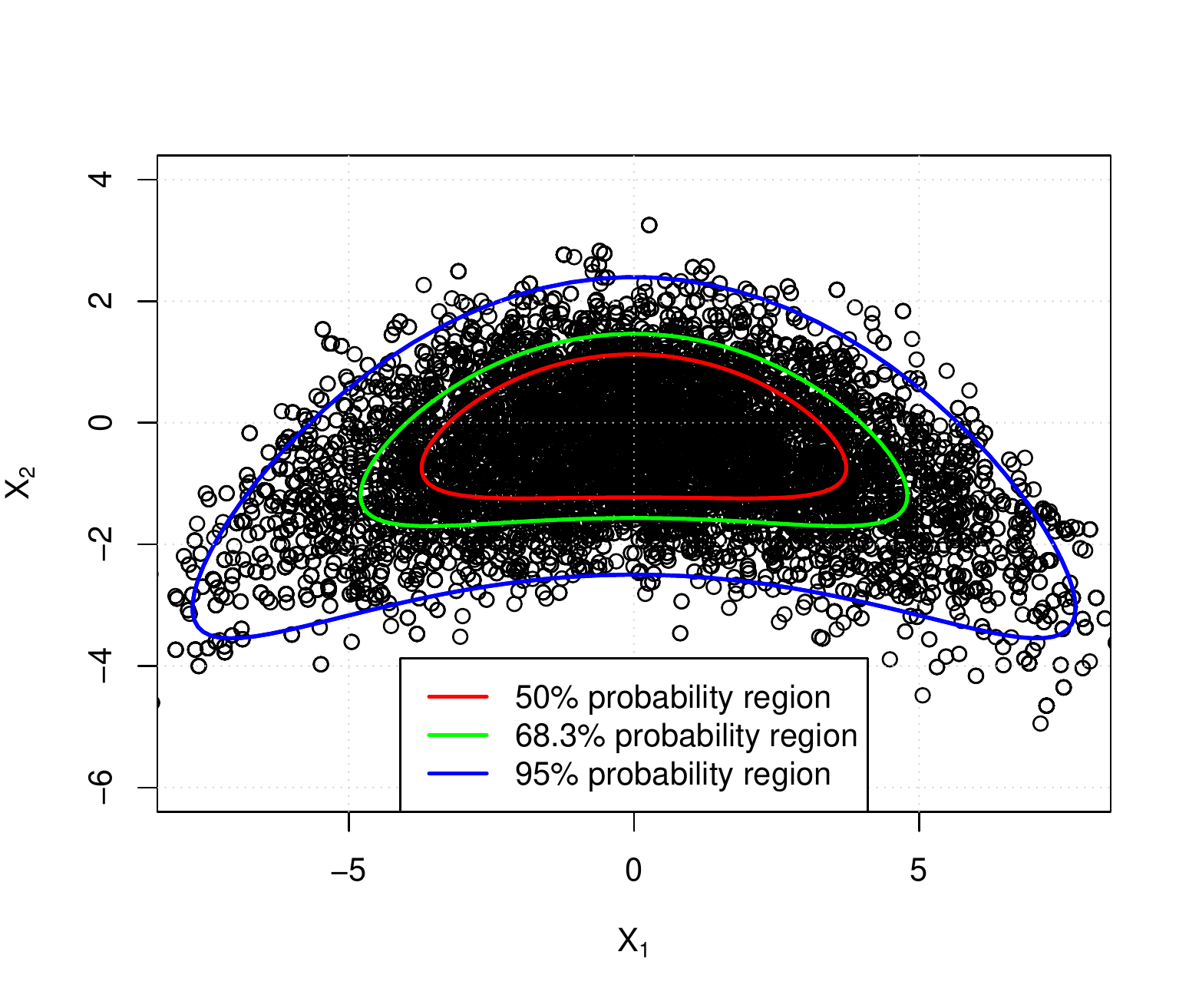} }}%
		\caption{Two-dimensional scatter plots of the samples from banana shaped transformed multivariate normal distribution and the corresponding theoretical contours.}
		\label{fig:banana_con}
\end{figure}
The $50$\%, $68.3$\%, and $95$\%	two-dimensional theoretical contours for the two `twisted' axes are plotted in Figure \ref{fig:banana_con} on top of the scatter plot of 20,000 samples from AM and TSAM, respectively. We  observe that the scatters of the samples from the two algorithms are very similar, and they are in agreement with the theoretical contours.
\begin{figure}[t!]
	\centering%
	\subfloat[Along first principal component]{{\includegraphics[width=7cm, height=6cm]{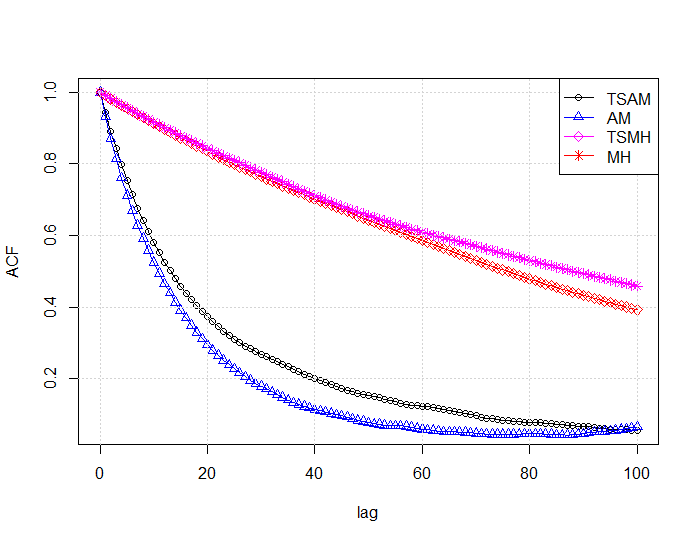}}}%
    \qquad
	\subfloat[Along orthogonal direction]{{\includegraphics[width=7cm, height=6cm]{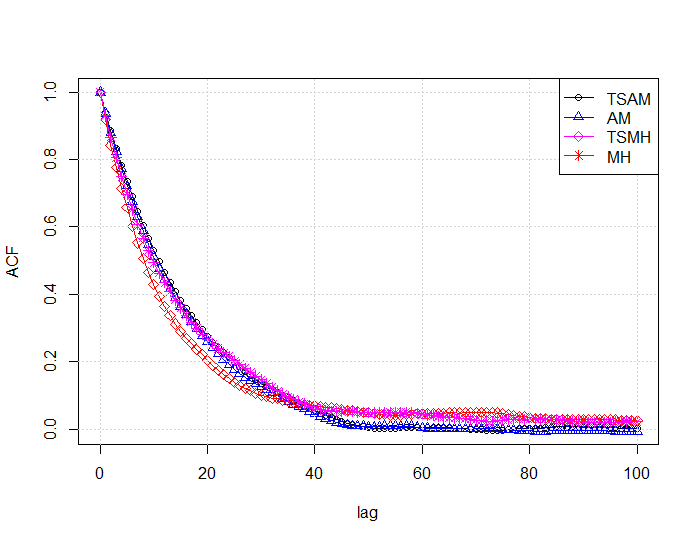} }}%
	\caption{The autocorrelation plot of different lags  from samples generated by the four sampling methods. The target distribution is the banana shaped transformed multivariate normal distribution.}
	\label{fig:banana_acf}
\end{figure}
\begin{figure}[h!]
\centering%
	\includegraphics[height=10cm, width=16cm]{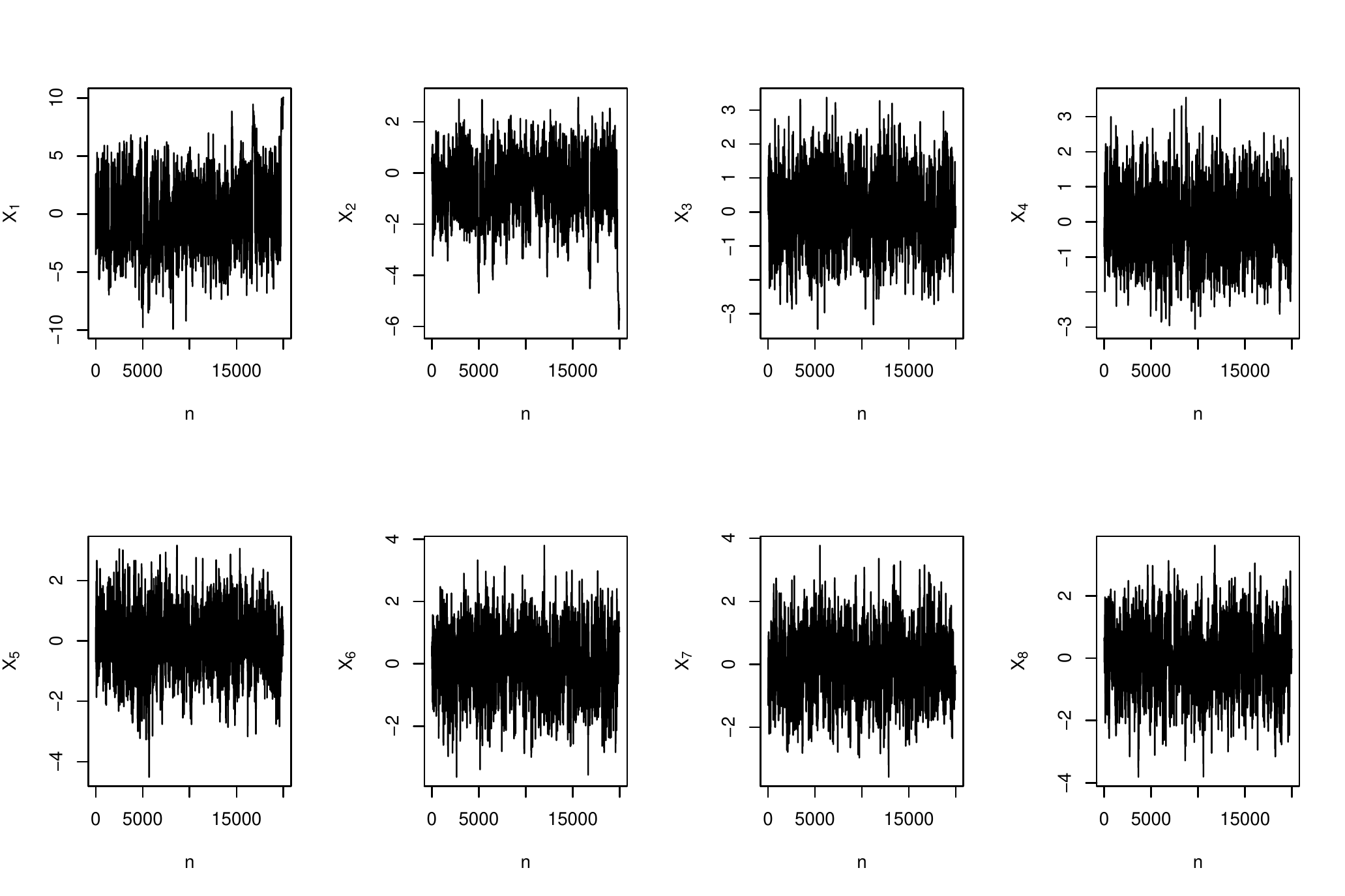}\
	\caption{The trace-plots of samples from the banana shaped multivariate normal distribution using TSAM.}
\label{fig:chain1}
\end{figure}
In Figure \ref{fig:banana_acf} the autocorrelations of different lags are plotted along two projections -- the direction of the largest principal component and the direction of an orthogonal component. The results are very similar to the previous example, i.e., AM and TSAM have much stronger convergence of the corresponding chains than MH and TSMH. Thus, the effective sample sizes for AM and TSAM methods are much larger than MH and TSMH. We also observe that both AM and TSAM perform very similarly in this regard. 
Finally, Figure \ref{fig:chain1} represents the trace-plot of all $8$ components of the samples from $\pi$ using TSAM. The trace-plot indicates that the chain has converged and mixed very well.

\subsection{Application in Bayesian logistic regression for tall data}
\label{real1}

Here we consider a binary classification problem using logistic regression for tall data to compare the computational efficiency of the TSAM over the traditional AM. Tall data here means that the number of observations is very large. The dataset is based on a phone marketing campaign conducted by a Portuguese bank to 41,188 potential customers ( see \cite{moro2014data} for details).  Let $y$ be the binary response variable, which represents whether or not a customer subscribed to a term deposit after contact. The predictor variables ($\bm{x}$) are the number of employees of the bank (four categories), type of job (three categories), type of contact, the month of the promotion call (three categories), and previous campaign outcome (three categories). 
We use the classical Bayesian logistic regression model 
\beq
y_i | \bm{\beta}, \bm{x}_i \sim \text{Bernoulli} (\pi (\bm{x}_i)),  \ \
\pi (\bm{x}_i) = 1/[1+\exp(-\bm{x}_i^T\bm{\beta})], \mbox{ for } i = 1,2, \cdots, N.
\eeq
We assume a vague prior for the  $d$-dimensional regression parameter  $\bm{\beta}$ that is  $\bm{\beta} \sim MVN(\bm{0}, \mathbf{\Sigma}_0)$, where $\mathbf{\Sigma}_0 = 100 I $. The posterior distribution of the model parameters can be expressed as $p(\bm{\beta} |\bm{y} , \bm{x}) \propto p(\bm{y}|\bm{\beta}, \bm{x})p(\bm{\beta})$.
Markov chain Monte Carlo methods are generally used for a sampling-based inference from this posterior distribution. 
As the number of observations (41,188) is very large the computation of the full likelihood is expensive. The full log-likelihood for the logistic regression model as also be written as two sums: 
\begin{equation}
\label{full_like}
l(\bm{\beta})=\log\left(p(\bm{y}|\bm{\beta}, \bm{x})\right) = \sum_{i:y_i=1} \{\bm{x}_i^T\bm{\beta} - \log(1+\exp(\bm{x}_i^T\bm{\beta}))\} +\sum_{j:y_j=0} \{ - \log(1+\exp(\bm{x}_i^T\bm{\beta}))\}.
\end{equation}
As the data is imbalanced (too many zeros than ones), the computation of the first sum is relatively cheap and the computation of the second term is expensive. \cite{raftery2012fast} approximated the second term using a sub-sampling method. The corresponding approximated log-likelihood is given by
\begin{equation}
\label{app_like}
l^*(\bm{\beta}) = \sum_{i:y_i=1} \{\bm{x}_i^T\bm{\beta} - \log(1+\exp(\bm{x}_i^T\bm{\beta}))\} +\frac{N_0}{n_0}\sum_{j \in S} \{ - \log(1+\exp(\bm{x}_i^T\bm{\beta}))\},
\end{equation}
where $N_0$ is the number of zero responses for the full data and $n_0 (<<N_0)$ is the number of zero responses in a sub-sample $S$. The sub-sample is selected randomly from the entire sample. As this approximated likelihood only needs a scan of the partial data, it is significantly cheaper to compute than the full likelihood.
Th approximated likelihood \eqref{app_like} is used to screen out bad proposals in the first stage of TSAM. The expensive likelihood \eqref{full_like} will only be evaluated at the second stage if the proposal is accepted in the first stage. 
So, we have
$\pi^*(\bm{\beta}) \propto \exp(l^*(\bm{\beta}))p(\bm{\beta})$ and $\pi(\bm{\beta}) \propto \exp(l(\bm{\beta}))p(\bm{\beta})$ for the TSAM algorithm given in Section \ref{TSAM}.%

To evaluate the efficiency of the MCMC algorithms, accounting for autocorrelation, the effective draws per minute (EDPM) is a good measure \citep{payne2018}. The EDPM is a measure of the equivalent number of independent posterior draws per minute the MCMC chain represents and is defined as 
\begin{equation}
    \displaystyle
    EDPM = t^{-1} \left( \frac{n}{1+2\sum_{k=1}^{\infty} \rho_k} \right).
\end{equation}
The EDPM incorporates both the execution time and autocorrelation of the chain to measure its efficiency. Here $\rho_k$ is the autocorrelation of lag $k$ for the posterior samples and $t$ in the total CPU time (in minutes) to generate $n$ samples from the target distribution. Similarly, the relative effective draws per minute (REDPM) of MCMC sampler $1$ over MCMC sampler $2$ is given by $\frac{EDPM_1}{EDPM_2}$. Here we use the REDPM to compare the overall performance of TSAM and AM.
	\begin{figure}[t!]
		\centering%
			\includegraphics[height=12cm, width=16cm]{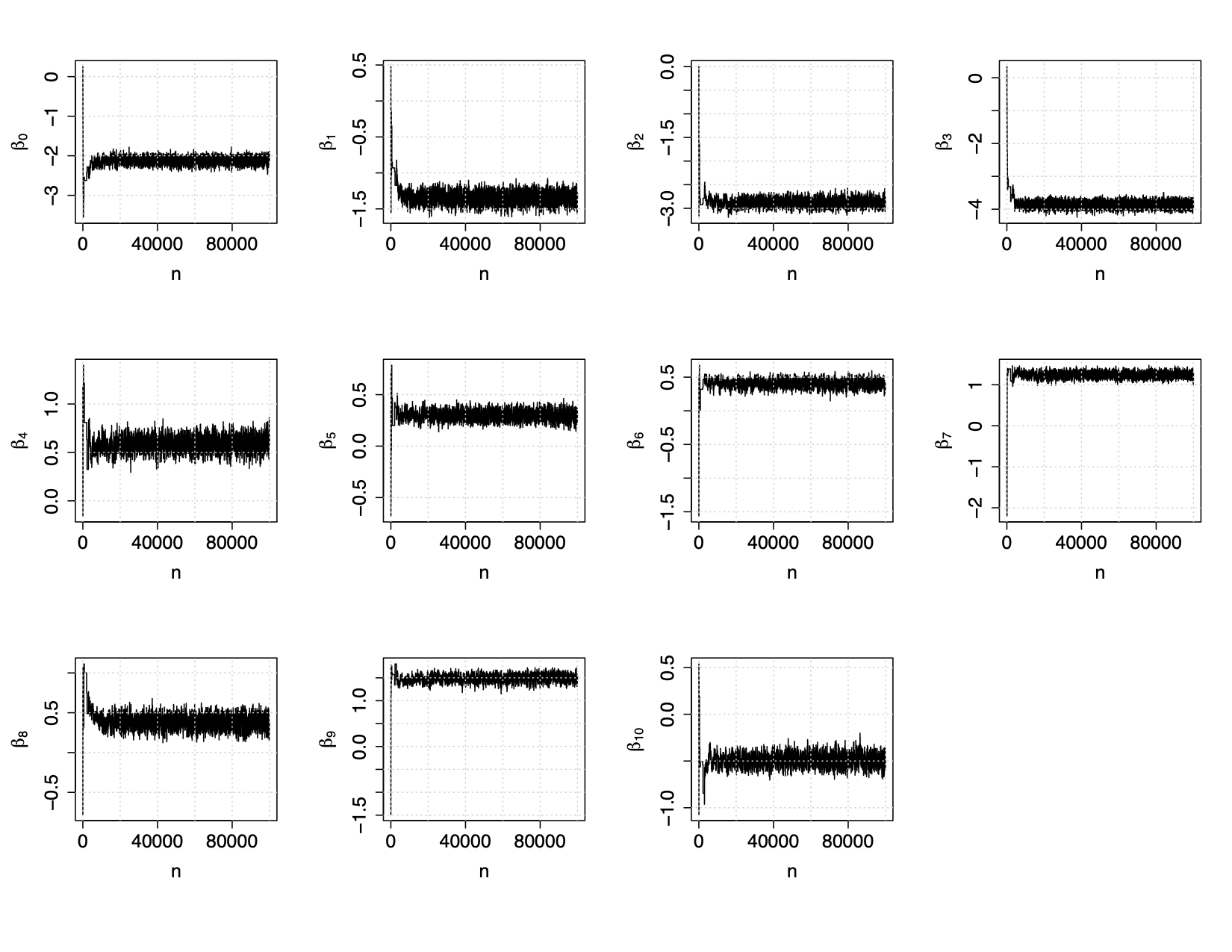}\
		\caption{The trace-plot of posterior samples generated by TSAM. Each plot is for one of the 11 parameters of the logistic regression model.}
		\label{fig:chain_bank}
	\end{figure}
		
In our example we have a total of 41,188 observations, out of which 36,548 observations have zero responses. We randomly subsampled 10,000 of these to compute the approximated likelihood, i.e., $N_0=$36,548 and $n_0=$10,000. As a result, the approximated likelihood used in the first stage of TSAM is about 4 times faster than the true likelihood.  Figure \ref{fig:chain_bank} demonstrates that the convergence and the mixing of the chains are good for the TSAM algorithm. The marginal posterior distributions of the model parameters from the TSAM and AM are shown in Figure \ref{fig:post_bank}. We observe that the posteriors from the two methods overlap with each other, confirming that both methods behave similarly in terms of approximation of the true posterior distribution. The overlapping autocorrelation plots in Figure \ref{fig:redpm_bank} (a) also confirm the similar mixing and convergence properties of the two methods.
	\begin{figure}[h!]
		\centering%
			\includegraphics[height=10cm, width=16cm]{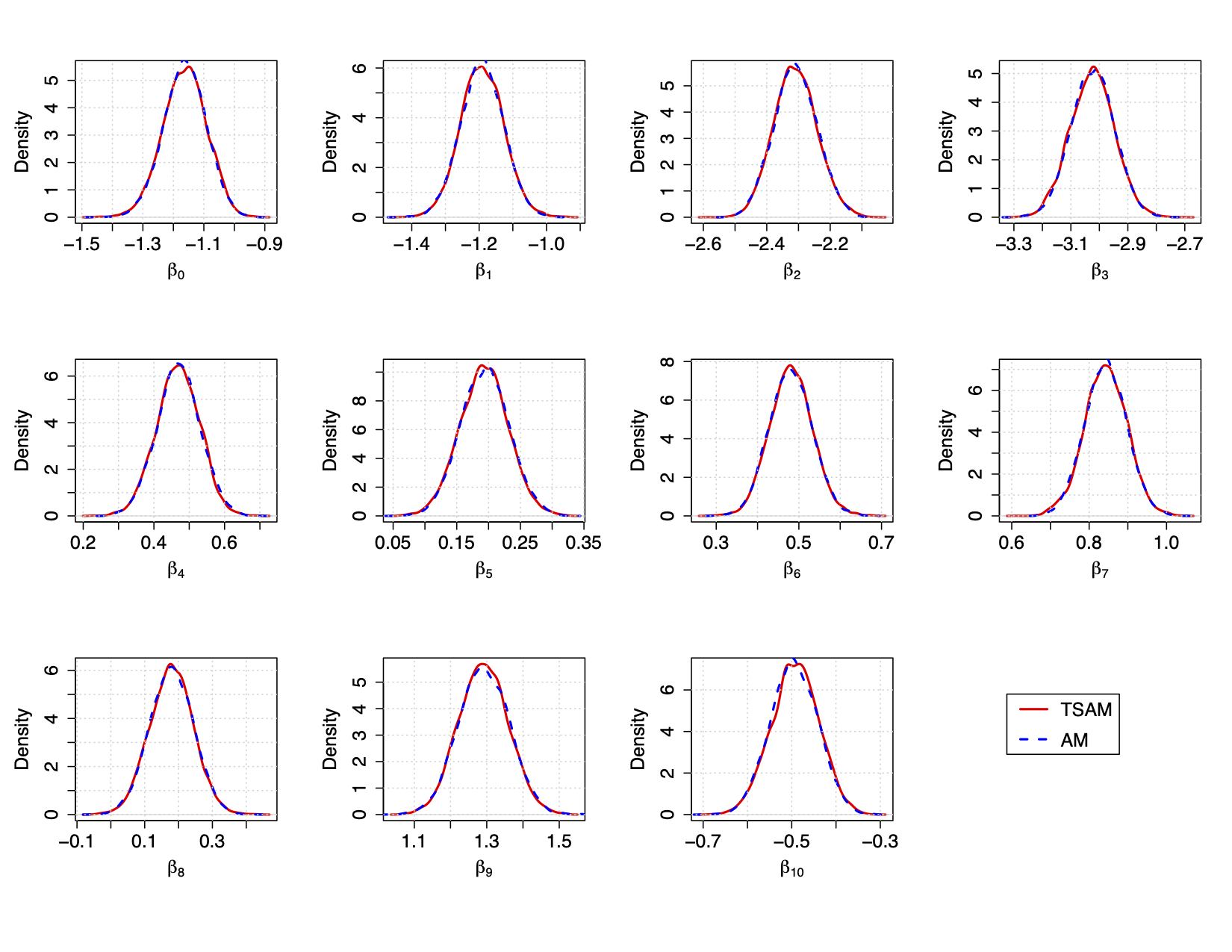}\
		\caption{The marginal posterior density for the 11 parameters of the logistic regression model. The posterior samples are generated using the TSAM and the AM method.}
		\label{fig:post_bank}
	\end{figure}
	 \begin{figure}[h!]
		\centering%
		\subfloat[Log Posterior]{{\includegraphics[width=7.5cm, height=6cm]{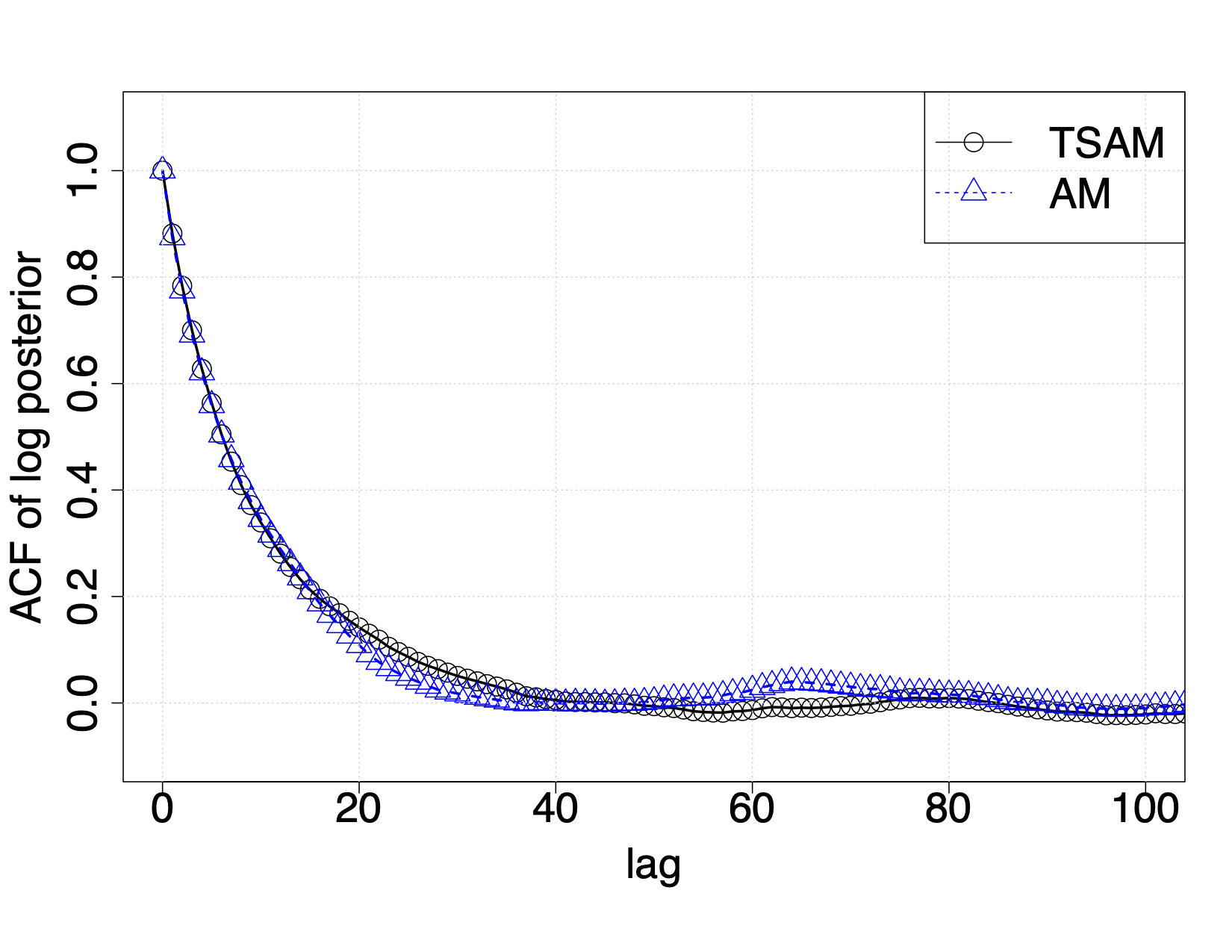}}}%
	    \qquad
		\subfloat[REDPM]{{\includegraphics[width=7.5cm, height=6cm]{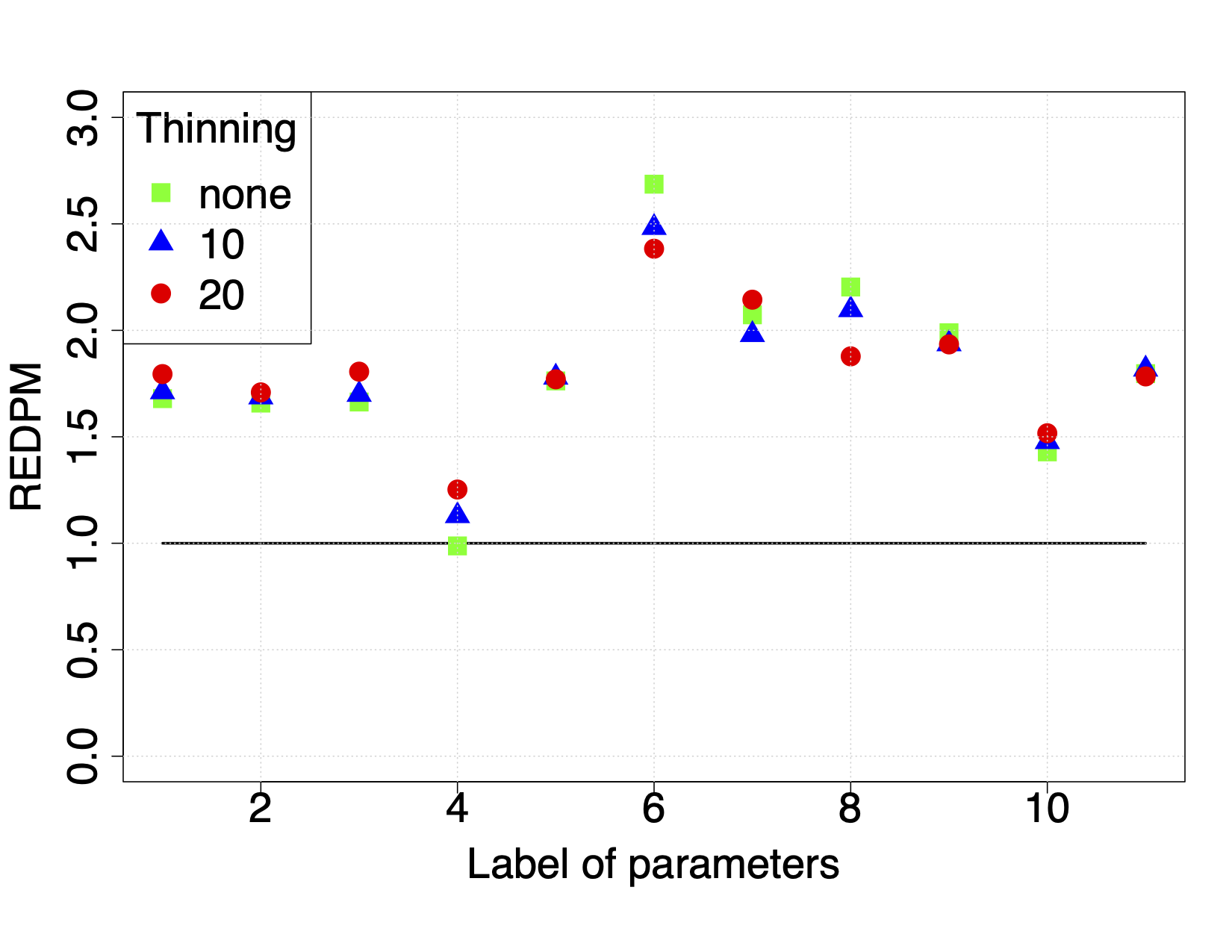} }}		
		\caption{(a): The autocorrelation plot of different lags for the log-posterior density in the logistic regression model; (b): REDPM of TSAM over AM for all the 11 model parameters and three thinning strategies.}
		\label{fig:redpm_bank}
	\end{figure}
The REDPM of the TSAM method over the AM method for each coefficient $\beta_0, \beta_1, \cdots, \beta_{10}$ are shown in Figure \ref{fig:redpm_bank} (b). Here three cases were considered i) no thinning: all the samples after the burn-in are used to calculate the REDPM ii) thinning 10: every 10-th sample from the chain is used to calculate the REDPM and iii) thinning 20: every 20-th sample from the chain is used to calculate the REDPM. We observe that the REDPM of TSAM over AM are all above one for each of the three cases and for each parameter. The REDPM of TSAM over AM using the autocorrelations of the log posterior is $1.53$. So we can conclude that the TSAM  algorithm performs better than AM in terms of computation efficiency while the accuracy of the two methods is very similar.
%
%
%
 We also tried the random walk MH and TSMH sampler for this posterior with jump size $2.4^2/11$, but the chain failed to converge within 100,000 iterations for both the methods.	

\subsection{Application in  Bayesian calibration of a predator-prey model} \label{real2}
In this example, we consider a Bayesian calibration problem in the Lotka-Volterra predator-prey system \citep{lotka1925principles, volterra1926fluctuations}.  To characterize the oscillating populations of predators and prey, Lotka and Volterra formulated the following parametric differential equations 
\beq
\frac{dy_1}{dt} = \alpha y_1 - \beta y_1 y_2, \ \ 
\frac{dy_2}{dt} = -\gamma y_2 + \delta y_1 y_2,
\label{LV}
\eeq
where $y_1$ represents the prey population, $y_2$ is the predator population, $\alpha$ is the prey growth rate, $\beta$ is the predation rate, $\gamma$ is the predator growth rate, and $\delta$ is the predator death rate. Given these model parameters and the initial population sizes, the forward model \eqref{LV} can be solved numerically to predict the predator and prey population. Here we consider the inverse problem of calibrating the model parameters given the observed data on the predator and prey population. In particular, we use the data of Canadian lynx and snowshoe hare populations between 1900 and 1920 collected annually by the Hudson’s Bay Company \citep{odum1953fundamentals} to calibrate the model parameters.
Let us assume that $z_{ji}$ is the observed population of $j$-th species at time $t_i$, and $y_{ji}$ is the corresponding solution of the Lotka-Volterra equations \eqref{LV} using a numerical solver with initial values $(y_1^0, y_2^0)$, where $j = 1,2$, $i = 1,2, \cdots, N$ and $N$ is the number of total observations. Then, the  model for the observed data is given by  
\beq
\log(z_{ji}) = \log(y_{ji}) + \epsilon_{ji}, \ \
\epsilon_{ji} \iid N(0, \sigma^2_j),
\eeq
i.e., we assume
$ z_{ji} \sim {\rm LogNormal}(y_{ji}, \sigma_j^2) \ j,=1,2.$
We have a total of eight parameters to calibrate -- four parameters ($\alpha$, $\beta$, $\gamma$ and $\delta$) governing the system dynamics, two initial values ($y_1^0$ and $y_2^0$) for hare and lynx, and two parameters denoting the corresponding observational error variances ($\sigma^2_1$, $\sigma^2_2$). A Bayesian hierarchical model is used to calibrate these parameters of the model. More specifically, the solution to the inverse problem is given by the following posterior distribution of the model parameters given the data:
\beq 
    p(\alpha, \beta, \gamma, \delta, \sigma_1, \sigma_2, y_1^0, y_2^0|\z_1, \z_2) \propto p( \z_1,\z_2| \alpha, \beta, \gamma, \delta, \sigma_1, \sigma_2, y_1^0, y_2^0)p(\alpha)p( \beta)p( \gamma)p( \delta)p( \sigma_1)p( \sigma_2)p( y_1^0)p( y_2^0),
\eeq 
where $p( \z_1,\z_2| \alpha, \beta, \gamma, \delta, \sigma_1, \sigma_2, y_1^0, y_2^0) = \prod_{i=1}^{n}p(z_{1i}|y_{1i}, \sigma_2)p(z_{2i}|y_{2i}, \sigma_2)$ is the likelihood term, $y_{ji}$s are the solution to the system of ODEs \eqref{LV} and $p(z_{ji}|y_{ji}, \sigma_j)$ is the density of ${\rm LogNormal}(y_{ji}, \sigma_{j}^2), \ j=1,2; \ i = 1, 2, \cdots, N$.
The prior for all the parameters are assumed to be independent of each other. Weekly informative uniform priors are assigned  for the system dynamic parameters and are given by
\beq
\alpha, ~ \gamma \sim U(0, 0.1), \ \ 
\beta,~ \delta \sim U(0,0.01).
\eeq 
Log-normal priors are used for the noise standard deviations and the initial population sizes which are given by
\beq
    \sigma_j \sim {\rm LogNormal}(-1,1), \ \ y_j^0 \sim {\rm LogNormal}(\log(10), 1), \ j=1,2.
\eeq

We use this posterior distribution to illustrate the computing efficiency of the TSAM sampler over the usual AM sampler. Let us assume that for a given set of initial values and system parameters, the true solution $y_{ji}$ can be well approximated by solving equation \eqref{LV} using a finite difference method in  a very fine grid e.g., daily scale. This computation is relatively expensive, given the fact that we have to solve the system for 20 years and repeat it thousands of times for posterior sampling. On the other hand, if we solve the same equation \eqref{LV} using a finite difference method in a monthly scale, then it would be 30 times faster. Let $y_{ji}^*$ be the solution using the finite difference method in a monthly grid and $p^*(\alpha, \beta, \gamma, \delta, \sigma_1, \sigma_2, y_1^0, y_2^0|\z_1, \z_2)$ be the corresponding posterior where $y_{ji}$ is replaced by $y_{ji}^*$ in the likelihood.
This approximate posterior distribution is used in the first stage of TSAM and the original posterior is used in the second stage of TSAM. 

		\begin{figure}[t!]
		\centering%
			\includegraphics[height=12cm, width=16cm]{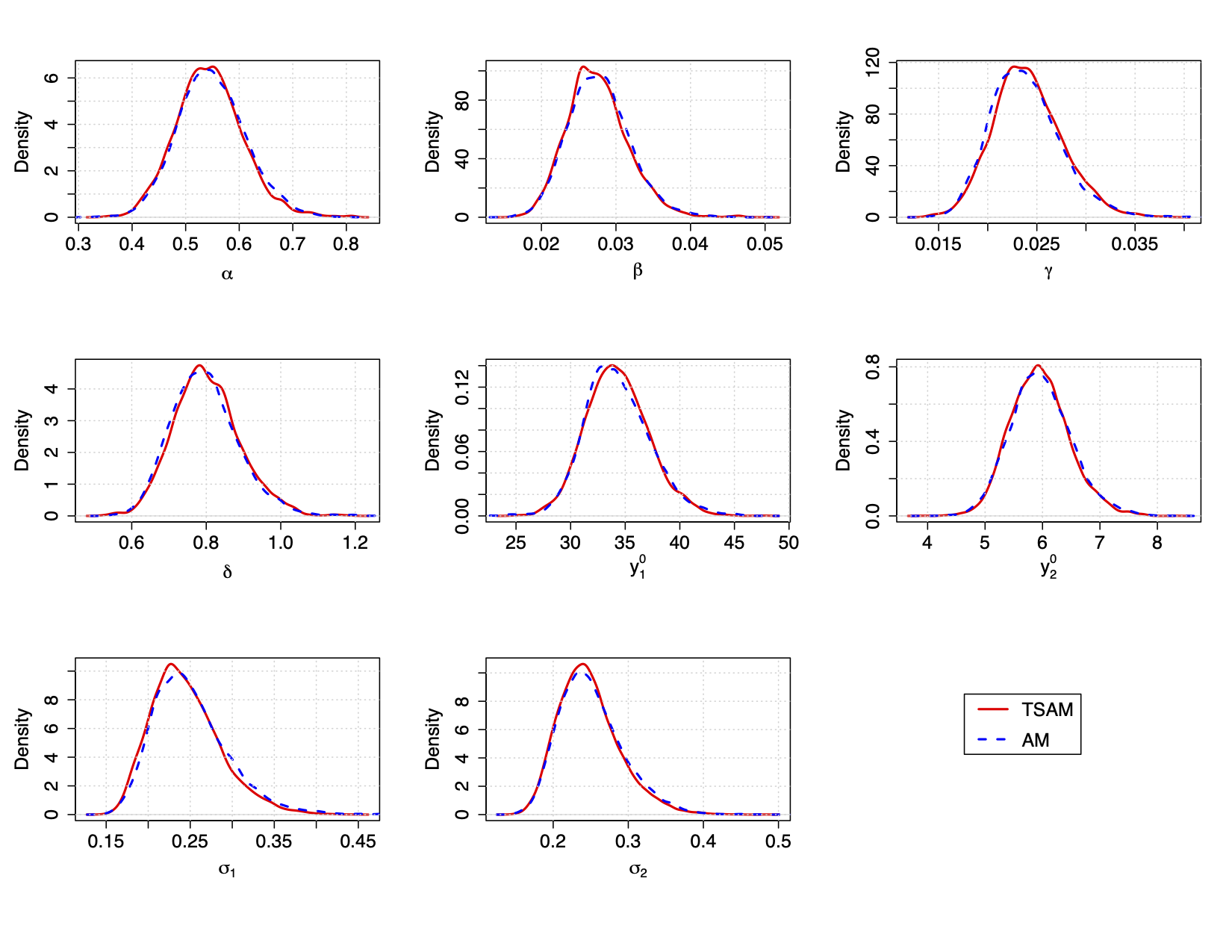}\
		\caption{The marginal posterior density of the 8 parameters in the predator-prey model. The posterior samples are generated using the TSAM and AM method. }
		\label{fig:post_pred}
	\end{figure}


	
	
	

	    \begin{figure}[t!]
		\centering%
		\subfloat[Log Posterior]{{\includegraphics[width=7.5cm, height=6cm]{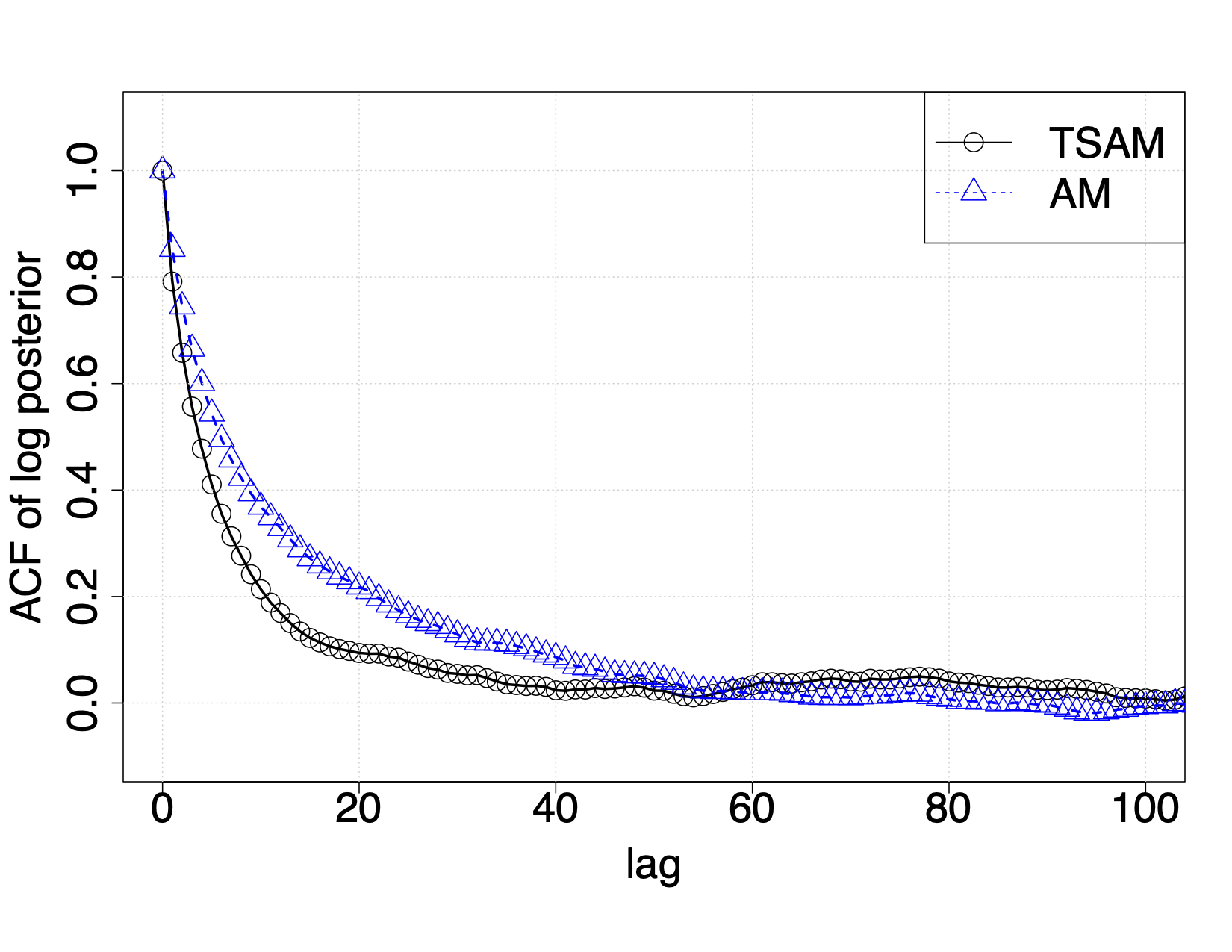}}}%
	    \qquad
		\subfloat[REDPM]{{\includegraphics[width=7.5cm, height=6cm]{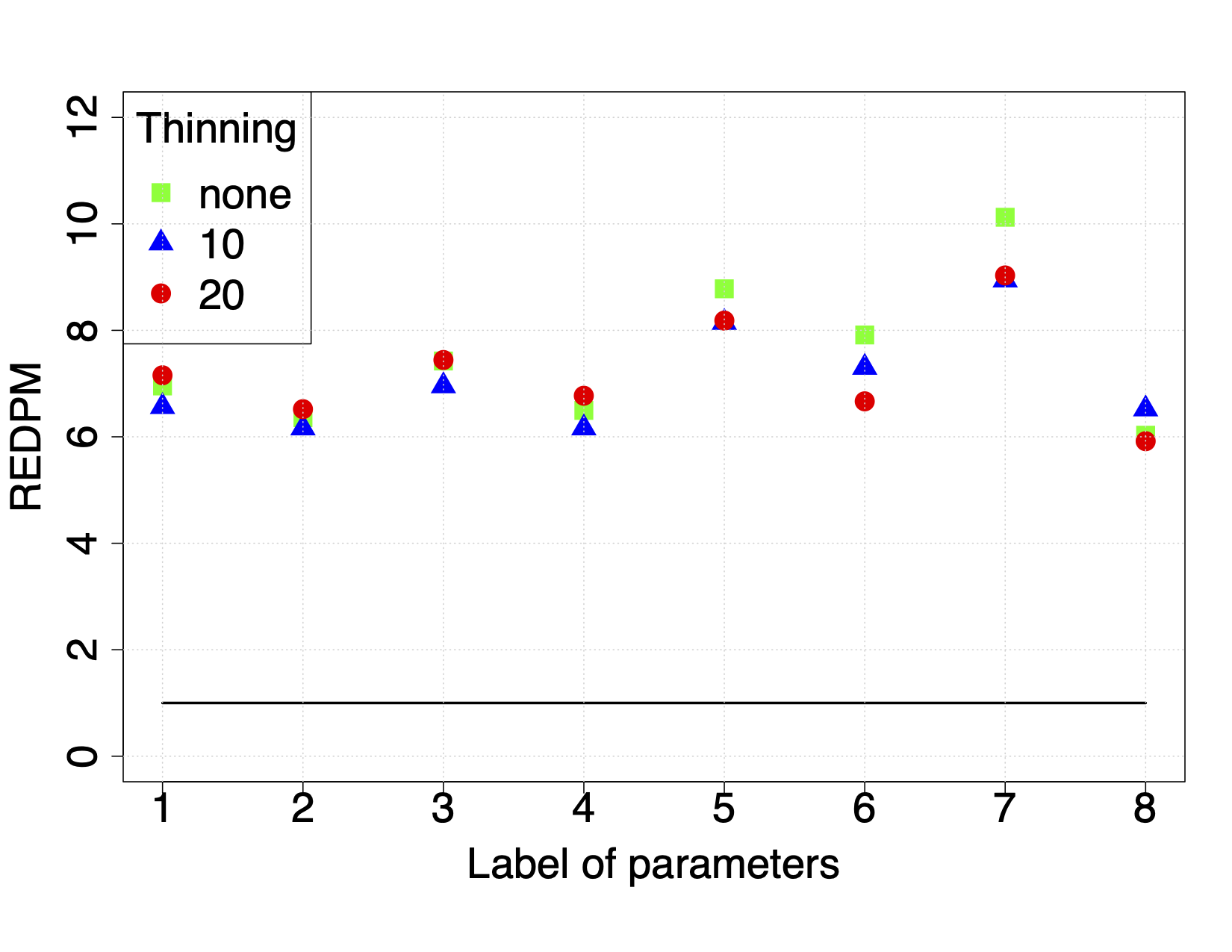} }}%
		\caption{(a): The autocorrelation plot of the log-posterior in the predator-prey example; (b): REDPM of the TSAM over the AM for all 8 parameters in the   predator-prey example and for the three thinning strategies.}
		\label{fig:redpm_predatorPrey}
	\end{figure}
	\begin{figure}[h!]
		\centering%
		\subfloat[Hare]{{\includegraphics[width=7.5cm, height=6cm]{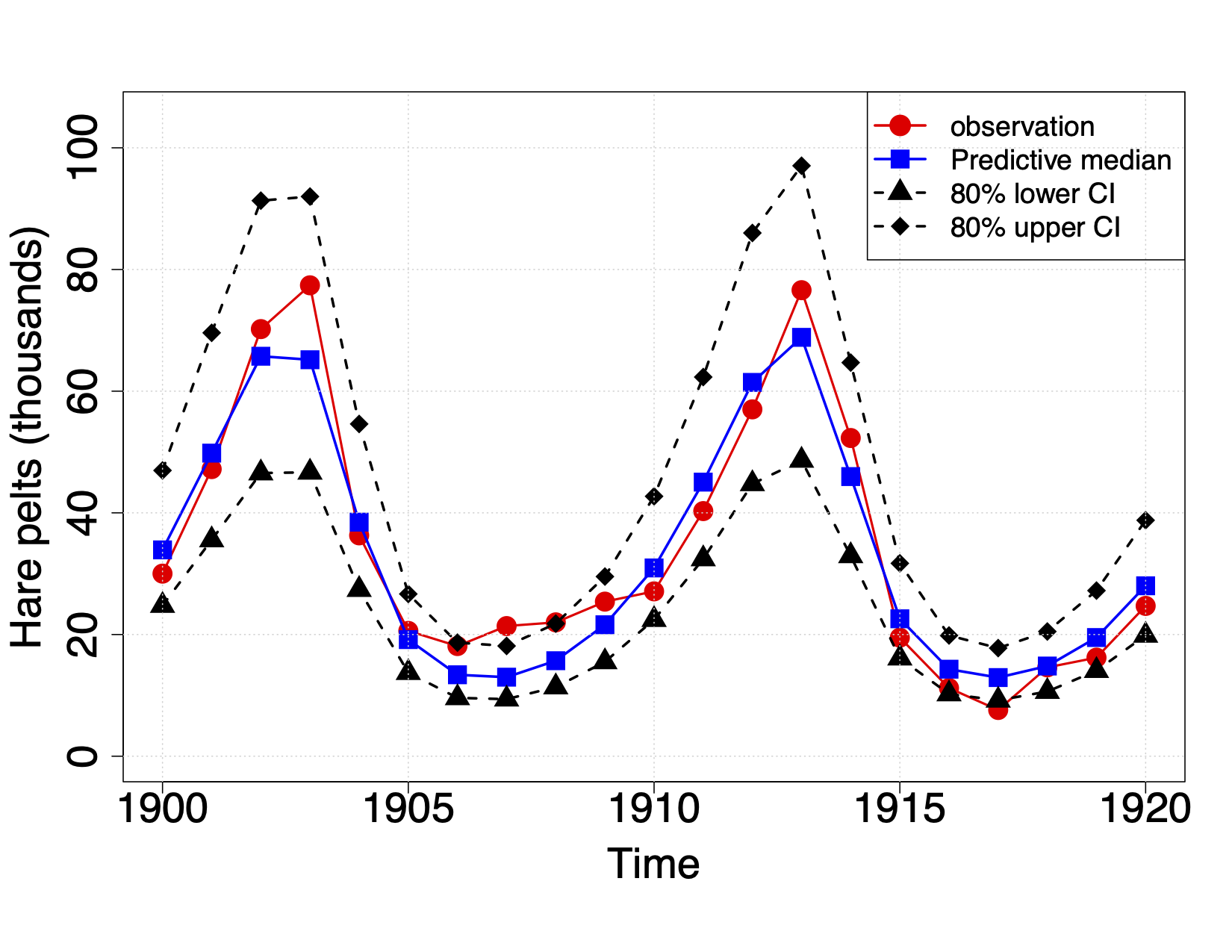}}}%
	    \qquad
		\subfloat[Lynx]{{\includegraphics[width=7.5cm, height=6cm]{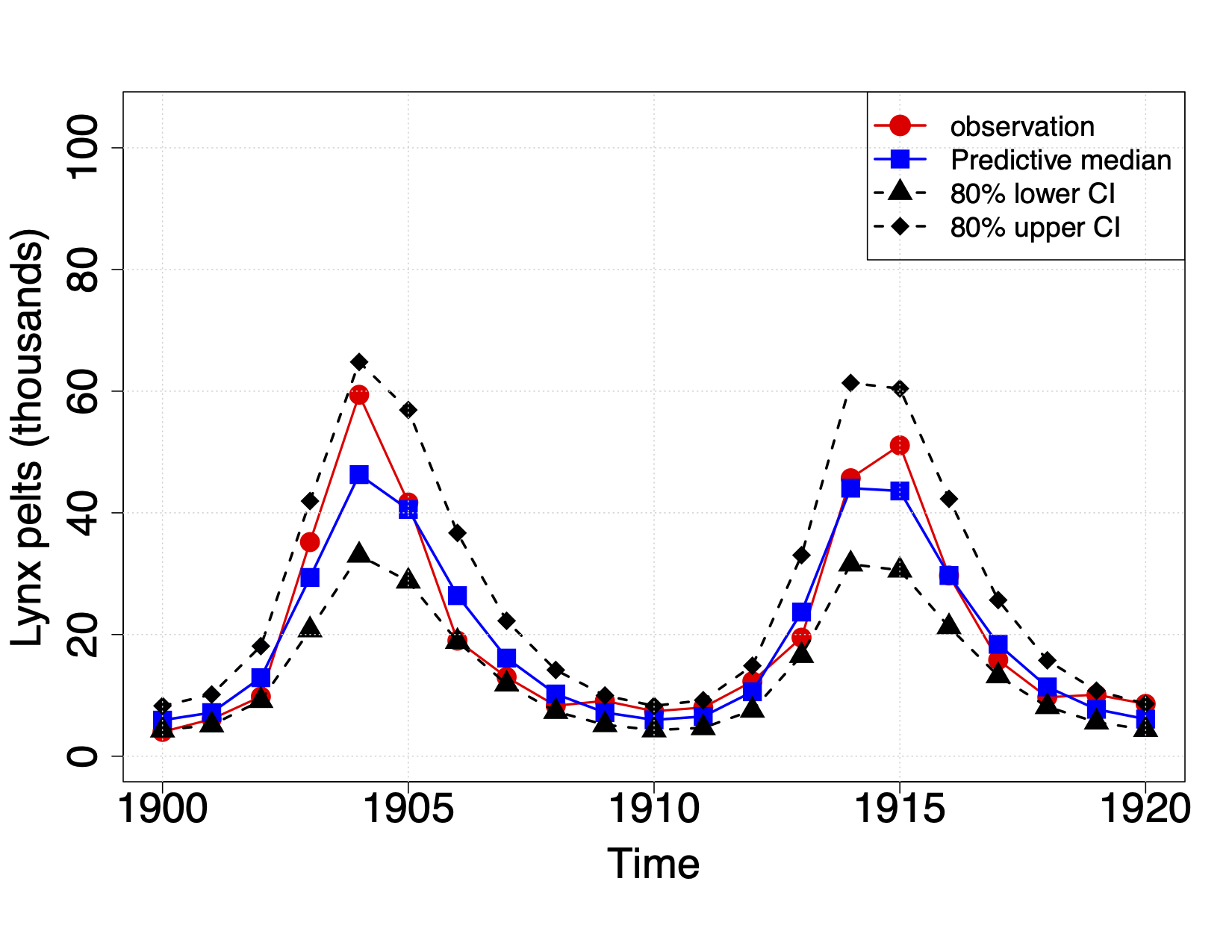} }}%
		\caption{The posterior predictive mean and $80$\% credible interval of Hare and Lynx population using the posterior samples from the TSAM sampler.}
		\label{fig:prediction}
	\end{figure}

The marginal posterior distributions of TSAM and AM are shown in Figure \ref{fig:post_pred}, which overlaps with each other confirming that both methods perform similarly in approximating the target posterior distribution. From Figure  \ref{fig:redpm_predatorPrey} (a) we observe that the autocorrelation plots of the log-posteriors are very similar for both methods. Figure \ref{fig:redpm_predatorPrey} (b) shows the REDPM of the TSAM over the AM for each of the parameters are between 5 and 7 in each of the three cases ``no thinning", ``thinning 10" and ``thinning 20''. The REDPM of TSAM over AM using the autocorrelation of the log posterior is $7.2$, much larger than 1. Thus, the TSAM algorithm is computationally very efficient when compared to the AM, while the accuracy for both methods is very similar. The posterior predictive mean and the $80$\% credible interval of hare and lynx populations are plotted in Figure \ref{fig:prediction}. The plot shows that the calibrated model using TSAM fits the observed data quite well. 	

From the four examples considered in this section, we can conclude that both the AM and TSAM methods have stronger convergence and better mixing properties than the MH and TSMH. Both the AM and the TSAM perform very similarly regarding ergodicity and convergence but the TSAM is much more computationally efficient than the AM. 


\section{Discussion and conclusion} \label{conclusion}
A new Monte Carlo sampler named two-stage adaptive Metropolis is proposed in this article, based on two very powerful ideas in MCMC literature -- Adaptive Metropolis and two-stage Metropolis-Hastings sampler. The proposed sampling algorithm is very effective in sampling from high-dimensional posterior distributions where the target density is expensive to compute and approximate inexpensive versions of it exist. The ergodicity property of the algorithm is proved theoretically. The simulation and real data examples corroborate with the theoretical result and the effectiveness of the sampler when compared to its peers. In this article, we considered only a few applications of the proposed TSAM sampler to establish its superiority over the existing methods. But the proposed sampling method will have a broader impact in many real applications, especially  in the area of Bayesian inference and calibration for large-scale computer models, where the forward model is computationally very expensive and alternative inexpensive surrogate models are easily available. Such applications very frequently arise in many fields such as subsurface modeling, remote sensing, chemical kinetics, etc. TSAM can also be easily generalized to a multi-stage adaptive Metropolis (MSAM) algorithm, where the adaptive proposals are screened at multiple ($>1$) stages before being accepted in the final stage. Such MSAM can be very useful for posterior simulations in a multi-fidelity computer model calibration setup. 


\bibliography{references}

\end{document}